\documentclass[a4paper,UKenglish,cleveref, autoref, thm-restate]{lipics-v2021}

\allowdisplaybreaks

\usepackage{xspace}

\usepackage{thm-restate}

\graphicspath{{./figures/}}

\newcommand{\etal}{\emph{et~al.}\xspace}
\newcommand{\A}{A}

\bibliographystyle{plainurl}

\title{The Tight Spanning Ratio of the Rectangle Delaunay Triangulation} 

\titlerunning{The Tight Spanning Ratio of the Rectangle Delaunay Triangulation} 

\author{Andr\'e van Renssen}{University of Sydney, Australia}{andre.vanrenssen@sydney.edu.au}{0000-0002-9294-9947}{}
\author{Yuan Sha}{University of Sydney, Australia}{ysha3185@sydney.edu.au}{}{}

\author{Yucheng Sun}{ETH Zurich, Switzerland}{yucsun@ethz.ch}{}{}

\author{Sampson Wong}{BARC, University of Copenhagen, Denmark}{sawo@di.ku.dk}{}{}

\authorrunning{A. van Renssen, Y. Sha, Y. Sun and S. Wong} 

\Copyright{Andr\'e van Renssen and Yuan Sha and Yucheng Sun and Sampson Wong} 

\ccsdesc[100]{Theory of computation~Computational geometry} 

\keywords{Spanners, Delaunay Triangulation, Spanning Ratio} 

\category{} 

\relatedversion{https://arxiv.org/abs/2211.11987} 




\nolinenumbers 


\EventEditors{John Q. Open and Joan R. Access}
\EventNoEds{2}
\EventLongTitle{42nd Conference on Very Important Topics (CVIT 2016)}
\EventShortTitle{CVIT 2016}
\EventAcronym{CVIT}
\EventYear{2016}
\EventDate{December 24--27, 2016}
\EventLocation{Little Whinging, United Kingdom}
\EventLogo{}
\SeriesVolume{42}
\ArticleNo{23}

\begin{document}


\maketitle

\begin{abstract}
Spanner construction is a well-studied problem and Delaunay triangulations are among the most popular spanners. Tight bounds are known if the Delaunay triangulation is constructed using an equilateral triangle, a square, or a regular hexagon. However, all other shapes have remained elusive. In this paper, we extend the restricted class of spanners for which tight bounds are known. We prove that Delaunay triangulations constructed using rectangles with aspect ratio $\A$ have spanning ratio at most $\sqrt{2} \sqrt{1+\A^2 + \A \sqrt{\A^2 + 1}}$, which matches the known lower bound.
\end{abstract}

\newpage

\section{Introduction}
\label{sec:introduction}
A geometric graph is a weighted graph in the plane where every vertex $v$ has coordinates $(x_v, y_v)$ and the weight of an edge between any two vertices is the Euclidean distance between its endpoints. A \emph{geometric spanner} is defined to be a class of subgraphs where the shortest path distance between any two vertices is at most the Euclidean distance between these two vertices multiplied by a constant $t$. The smallest constant $t$ for which this property holds is called the \emph{spanning ratio} or \emph{stretch factor} of the geometric spanner. A comprehensive overview on the topic of geometric spanners can be found in the book by Narasimhan and Smid~\cite{NS-GSN-06} and the survey by Bose and Smid~\cite{BS11}.

One way to construct a geometric spanner is by using a Delaunay triangulation. The Delaunay triangulation is defined as follows: for any two vertices $u$ and $v$, if there exists a circle with $u$ and $v$ on its boundary and no other vertex in its interior, then the edge between $u$ and $v$ is part of the Delaunay triangulation. Equivalently, this can be defined using three vertices $u$, $v$, and $w$, where the triangle connecting these three vertices is part of the Delaunay triangulation if and only if the unique circle through these three points does not contain any other vertices in its interior. For simplicity, it is usually assumed that no three points are collinear and no four points lie on the boundary of the circle. 

The tight spanning ratio of the Delaunay triangulation is not known. Dobkin~\etal~\cite{DFS90} showed an upper bound of $\pi (1 + \sqrt{5}) / 2 \approx 5.09$ for the spanning ratio, which Keil and Gutwin~\cite{KG92} improved to $4 \pi / 3\sqrt{3} \approx 2.42$. Currently, the best upper bound is 1.998, proven by Xia~\cite{X13}. A lower bound on the spanning ratio was provided by Bose~\etal~\cite{BDLSV11}, who showed that this is strictly larger than $\tfrac{\pi}{2}$. This was later improved to 1.59 by Xia and Zhang~\cite{XZ11}.

Usually, the distance between two points $u$ and $v$ in the plane is defined as $((x_u - x_v)^2 + (y_u - y_v)^2)^{\tfrac{1}{2}}$. This distance can be generalized to a family of metrics $L_p$ where the distance between $u$ and $v$ is defined as $((x_u - x_v)^p + (y_u - y_v)^p)^{\tfrac{1}{p}}$. The shape of a ``circle'' varies in different metrics, leading to different Delaunay triangulations in different metrics. For example, the shape of the ``circle'' would be a diamond or square in the $L_1$ and $L_{\infty}$ metrics. 

In 1986, Lee and Lin~\cite{LL86} introduced the notion of generalized Delaunay triangulations. Instead of using a circle to construct the graph, generalized Delaunay triangulations can be constructed using arbitrary geometric shapes. It was proven that any generalized Delaunay triangulation constructed using a convex shape is a spanner~\cite{BCCS10}. 

Although generalized Delaunay triangulations using arbitrary convex shapes are known to be spanners, their spanning ratios are less well understood. Tight bounds on the spanning ratio are only known when an equilateral triangle, a square, or a regular hexagon is used in the construction. For equilateral triangles, Chew~\cite{C89} showed that the spanning ratio is~2. For squares, Chew~\cite{C86} showed an upper bound of~$\sqrt{10} \approx 3.16$, and Bonichon~\etal~\cite{BGHP15} showed a matching upper and lower bound of~$\sqrt{4 + 2 \sqrt{2}} \approx 2.61$. When using regular hexagons, Perkovi\'c~\etal~\cite{DBLP:journals/jocg/Perkovic0T21} showed a tight bound of~2. 

Bose~\etal~\cite{BCR18} studied generalized Delaunay triangulations using rectangles. For rectangles with aspect ratio~$A$, they showed upper and lower bounds of~$\sqrt{2}(2A+1)$ and $\sqrt{2} \sqrt{1+\A^2 + \A \sqrt{\A^2 + 1}}$, respectively. We obtain a tight upper bound of $\sqrt{2} \sqrt{1+\A^2 + \A \sqrt{\A^2 + 1}}$, matching the previous lower bound. To obtain this, we significantly extend and generalize the approach of Bonichon~\etal~\cite{BGHP15}. 
We extend the class of shapes for which a tight bound is known for the spanning ratio of generalized Delaunay triangulations.
We note that the proof of our result is not a straightforward extension of Bonichon~\etal~\cite{BGHP15}, as we cannot simply rotate our lemmas to get them to prove both the horizontal and vertical cases simultaneously.

\section{Preliminaries}
Let us first formally define the rectangle Delaunay triangulation of a set of points $P$. Given an arbitrary axis-aligned rectangle $R$, the rectangle Delaunay triangulation is constructed by considering scaled translates of $R$ (rotations are not allowed). Such scaled translates are also referred to as homothets. Given two vertices $u$ and $v$ in $P$, the rectangle Delaunay triangulation contains an edge between $u$ and $v$ if and only if there exists a scaled translate of $R$ with $u$ and $v$ on its boundary which contains no vertices of $P$ in the interior. Equivalently, the rectangle Delaunay triangulation contains a triangle $\triangle u v w$ if and only if there exists a scaled translate of $R$ with $u$, $v$, and $w$ on its boundary which contains no vertices of $P$ in the interior. We note that different rectangles can give different rectangle Delaunay triangulations. 

For our proofs, we assume that $P$ is in general position. Specifically, we assume that no four vertices lie on the boundary of any scaled translate of $R$ and that no two vertices lie on a line parallel to any of the sides of $R$ (i.e., no two vertices lie on a vertical or horizontal line). These assumptions are common for Delaunay graphs and are required to guarantee their planarity.

Throughout this paper, we use $\A$ to denote the aspect ratio of the rectangle $R$ used in the construction of the rectangle Delaunay triangulation, i.e., $\A = l/s$ where $l$ and $s$ are the length of the long and short side of $R$ respectively. We also use $d_{t}(u,v)$ to denote the length of the shortest path in the rectangle Delaunay triangulation between $u$ and $v$, $d_x(u,v)$ to denote the difference in $x$-coordinate between $u$ and $v$, $d_y(u,v)$ to denote the difference in $y$-coordinate between $u$ and $v$, and $d_2(u,v)$ to denote the Euclidean distance between $u$ and~$v$. 

\section{Bounding the Spanning Ratio} 
\label{sec:spanningRatio}
To show an upper bound on the spanning ratio between any two vertices $u$ and $v$, we consider the sequence of triangles $T_1, T_2, ..., T_k$ intersecting with line segment $uv$. The order of this sequence is determined by the order in which these triangles are encountered when following $uv$ from $u$ to $v$ (as shown in Figure~\ref{fig:triangles}). The boundary of each triangle except $T_1$ and $T_k$ intersects the interior of $uv$ twice. Hence, we can define the last line segment of $T_i$ ($1 \le i < k$) that intersects $uv$ as the line segment involved in the second intersection. We use $h_i$ and $l_i$ to denote the endpoints of the last line segment of $T_i$, where $h_i$ is the endpoint above $uv$ and $l_i$ is the endpoint below $uv$. Since all $T_i$ are triangles, we have that for every $T_i$ and $T_{i+1}$, either $l_i = l_{i+1}$ or $h_i = h_{i+1}$. We also define $h_0 = l_0 = u$, $l_k = h_k = v$. 

\begin{figure}[ht]
\centering
 \includegraphics[width=0.8\textwidth]{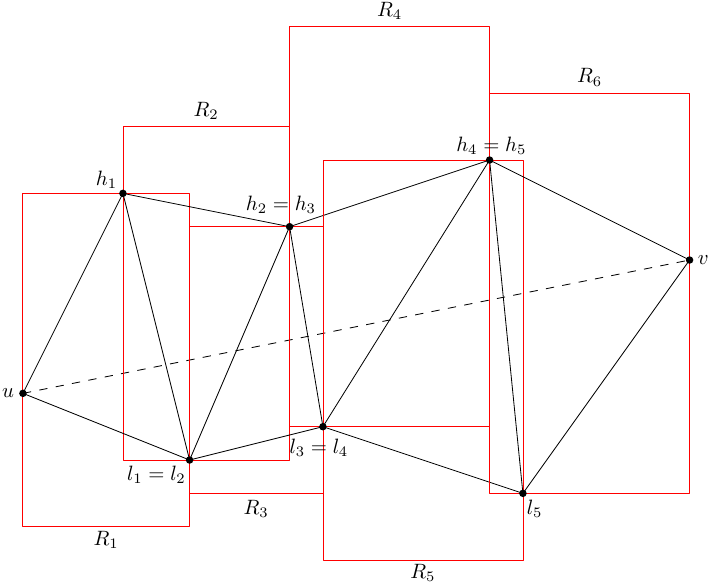}
\caption{The triangles intersecting $uv$ and their associated rectangles and $h_i$ and $l_i$.}
\label{fig:triangles}
\end{figure} 

Each triangle $T_i$ also has an associated rectangle $R_i$: the scaled translate of $R$ that has the three vertices of $T_i$ on its boundary. For ease of reference, we use W (west), N (north), E (east), and S (south) to refer to the four sides of a rectangle. We also use these sides to classify an edge, for example, if an endpoint of an edge lies on the W side of $R_i$ and the other endpoint lies on the N side of $R_i$, we call the edge a WN edge. We also define $u$ to be on the E side of $R_0$ (not associated with any triangle), as this will simplify some of the lemma statements. 

Define~$L$ to be the length of the vertical side of~$R$ divided by the length of the horizontal size of~$R$. Note that $L$ can be either $\A$ or $1/\A$. For our proofs, it is helpful to distinguish between edges of slope less than the slope of the diagonal of $R$ and those with larger slope. 

\begin{definition}
An edge is gentle if it has a slope within [-$L, L$]. Otherwise it is steep.
\end{definition}

We let $u,v$ be any two vertices in the rectangle Delaunay triangulation. Fix the $(x,y)$-coordinate system so that we have $L d_x(u,v) > d_y(u,v)$. Note that this is without loss of generality, since we can simply switch the $x$- and $y$-axes if needed. For instance, in Figure~\ref{fig:triangles}, let the $x$-axis be horizontal (and the $y$-axis be vertical) so that $Ld_x(u,v)>d_y(u,v)$ where $L=\A$. Therefore, if we consider a scaled translate of $R$ with $u$ lying in the lower left corner and passing through $v$, then $v$ lies on the E side. Without loss of generality, we assume $u$ to be at the origin $(0,0)$ and $v$ to be at $(x,y)$. We use $R(u,v)$ to denote the rectangle with $u$ and $v$ in opposite corners. 

In order to bound the spanning ratio of the rectangle Delaunay triangulation, we first define what it means for a rectangle to have potential. This later helps us to bound the total length of the shortest path between $u$ and $v$ in the rectangle Delaunay triangulation. 

\begin{definition}
The inductive point $c$ of a rectangle $R_i$ is the point with larger $x$-coordinate out of $h_i$ and $l_i$. Rectangle $R_i$ is inductive if edge ($l_i$, $h_i$) is gentle. 
\end{definition}

Let $d_{R_i}(h_{i},l_{i})$, $1\leq i\leq k$, denote the Euclidean distance when moving clockwise from $h_i$ to $l_i$ along the sides of $R_i$. 
\begin{definition}
A rectangle $R_i$ has potential if $d_{t} (u, h_{i}) + d_{t} (u, l_{i}) + d_{R_{i}}(h_{i},l_{i}) \le (2+2L) x_i$ where $x_i$ is the $x$-coordinate of the E side of $R_i$.
\end{definition}

We are now ready to prove that rectangles that are not inductive pass on their potential. 

\begin{lemma}
\label{lem:potential}
If $R(u,v)$ is empty and $(u,v)$ is not an edge in the rectangle Delaunay triangulation, then $R_1$ has potential. Furthermore, for any $1 \le i < k$, if $R_i$ has potential but is not inductive (i.e., $(l_{i}, h_{i})$ is steep), then $R_{i+1}$ has potential. 
\end{lemma}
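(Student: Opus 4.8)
The plan is to establish the two assertions separately, treating the base case and the inductive step, since both amount to the same kind of potential-accounting argument. For the base case, suppose $R(u,v)$ is empty and $(u,v)$ is not an edge. First I would locate the triangle $T_1$ and its rectangle $R_1$: since $u$ lies in the lowerleft corner of $R(u,v)$ and $v$ on its E side, and $R(u,v)$ contains no vertex, the first triangle crossed by $uv$ has $u$ as a vertex, so $h_0 = l_0 = u$ is a vertex of $T_1$ and the last edge $(h_1, l_1)$ of $T_1$ crosses $uv$. I would then bound $d_t(u,h_1) + d_t(u,l_1) + d_{R_1}(h_1,l_1)$ from above by $(2+2L)x_1$. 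The key observation is that $u$ is on the boundary of $R_1$ (being a vertex of $T_1$), so $d_t(u,h_1)$ and $d_t(u,l_1)$ are each at most the length of a single edge, which in turn is bounded by a portion of the perimeter of $R_1$; combined with the fact that $x_1 \le x$ and the geometry of $R_1$ sitting inside the region where $R(u,v)$-type estimates apply, this should give the needed inequality. The constant $2+2L$ is exactly the perimeter of a $1 \times L$-proportioned rectangle scaled to width $x_i$, which is the natural quantity here, so I expect the base case to reduce to checking that three geodesic-or-edge lengths plus a partial perimeter do not exceed the full scaled perimeter.

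For the inductive step, assume $R_i$ has potential and is not inductive, i.e. the edge $(l_i, h_i)$ is steep. I would use the structural fact stated before the lemma: either $l_i = l_{i+1}$ or $h_i = h_{i+1}$, since consecutive triangles share a vertex. Consider the case $h_i = h_{i+1}$ (the other is symmetric after swapping roles of top and bottom). Then $T_{i+1}$ has $h_i$ as a vertex, and its last edge is $(h_{i+1}, l_{i+1}) = (h_i, l_{i+1})$. I need to show
\[
d_t(u,h_{i+1}) + d_t(u,l_{i+1}) + d_{R_{i+1}}(h_{i+1},l_{i+1}) \le (2+2L)\,x_{i+1}.
\]
Starting from the potential of $R_i$, which controls $d_t(u,h_i) + d_t(u,l_i) + d_{R_i}(h_i,l_i) \le (2+2L)x_i$, I would write $d_t(u,l_{i+1}) \le d_t(u,l_i) + d_t(l_i,l_{i+1})$ and bound $d_t(l_i,l_{i+1})$ by an edge of $T_{i+1}$ and hence by a piece of the boundary of $R_{i+1}$. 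Since $h_{i+1} = h_i$, the term $d_t(u,h_{i+1})$ is unchanged. The remaining task is the geometric bookkeeping: show that the gain $(2+2L)(x_{i+1} - x_i)$ in the budget, which is $2+2L$ times the horizontal advance, pays for the new travel $d_t(l_i,l_{i+1})$ plus the change in partial perimeter $d_{R_{i+1}}(h_{i+1},l_{i+1}) - d_{R_i}(h_i,l_i)$. This is where the hypothesis that $(l_i,h_i)$ is steep enters crucially: steepness forces the last edge of $R_i$ to be "short in $x$", so $R_{i+1}$ extends far enough to the right relative to $R_i$ to generate the needed budget; if the edge were gentle this would fail, which is precisely why such rectangles are declared inductive and handled elsewhere.

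I expect the main obstacle to be the careful case analysis of how $R_{i+1}$ is positioned relative to $R_i$ and which sides of $R_{i+1}$ the points $u$ (via $R_0$, on the E side), $l_{i+1}$, and $h_{i+1}$ lie on — i.e., classifying the edge $(l_i, l_{i+1})$ or $(h_i, h_{i+1})$ as a WN, WS, NS, etc. edge of $R_{i+1}$ and, within each subcase, expressing the clockwise partial perimeters $d_{R_i}$ and $d_{R_{i+1}}$ explicitly in terms of coordinates so that the telescoping inequality can be verified. The steep/gentle dichotomy and the normalization $L\,d_x(u,v) > d_y(u,v)$ should keep the number of viable subcases small, but each requires its own (short) coordinate computation comparing a sum of edge lengths against $(2+2L)$ times a horizontal displacement; packaging these uniformly, perhaps via a single inequality of the form "edge length $\le$ relevant partial perimeter" applied repeatedly, will be the key to keeping the proof manageable.
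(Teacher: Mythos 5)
Your plan has the same skeleton as the paper's argument (perimeter bound for $R_1$; for the step, show the budget gain $(2+2L)(x_{i+1}-x_i)$ absorbs the change in the partial perimeter and then transfer to the new endpoint via the triangle inequality), but both places where the lemma's real content lies are left as expectations rather than arguments. In the base case, the inequality you need is that the perimeter of $R_1$, which is $(2+2L)$ times the \emph{width} of $R_1$, is at most $(2+2L)x_1$, where $x_1$ is the absolute $x$-coordinate of the E side of $R_1$. This holds only if the W side of $R_1$ has $x$-coordinate at least $x_u=0$, i.e.\ only if $u$ lies on the W side of $R_1$. That is exactly where the emptiness of $R(u,v)$ enters: since $h_1$ and $l_1$ have larger $x$-coordinate than $u$, one must exclude $u$ lying on the S side of $R_1$ (which would force $l_1$ onto the E side of $R_1$ inside $R(u,v)$), and the N/E possibilities are ruled out by the position of the segment $uv$. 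Your sketch never uses the emptiness hypothesis here and instead appeals to the irrelevant bound $x_1\le x$; moreover, the resulting side classification ($u$ on W, $h_1$ on N or E, $l_1$ on E or S) is also what makes $d_2(u,h_1)$, $d_2(u,l_1)$ and $d_{R_1}(h_1,l_1)$ correspond to disjoint boundary arcs, so the ``three lengths plus a partial perimeter do not exceed the full perimeter'' check cannot be done without it.

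In the inductive step, the case analysis you defer \emph{is} the proof, and your stated mechanism for why steepness matters points in the wrong direction. You argue that steepness makes the edge ``short in $x$'' so that $R_{i+1}$ ``extends far enough to the right to generate the needed budget''; but no lower bound on $x_{i+1}-x_i$ is needed or available ($x_{i+1}-x_i$ can be arbitrarily small). Steepness (together with, say, $x_{l_i}<x_{h_i}$) is used to pin down which sides the endpoints occupy: $l_i$ on the S side of $R_i$ and $h_i$ on the N or E side (in particular the edge cannot span W to E), whence $h_i$ lies on the N side of $R_{i+1}$ and $l_i$ on its S or W side. Only with this classification can one verify that the increase $d_{R_{i+1}}(h_i,l_i)-d_{R_i}(h_i,l_i)$ equals $2(x_{i+1}-x_i)$ or is at most $(2+2L)(x_{i+1}-x_i)$ in the respective configurations, after which the transfer to $(h_{i+1},l_{i+1})$ works as you describe because the new vertex lies on the clockwise boundary arc of $R_{i+1}$ between $h_i$ and $l_i$. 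Without that side analysis your telescoping inequality has nothing to rest on, and searching for a lower bound on the horizontal advance would lead you away from the actual argument.
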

\begin{proof}
Recall that $u$ is at the origin ($u=(0,0)$). By our general position assumption, $u$, $h_1$, and $l_1$ all lie on different sides of $R_1$. Since $R(u,v)$ is empty, $u$ cannot lie on the S side as this would mean that $l_i$ lies on the E side inside $R(u,v)$, similarly $u$ cannot lie on the N side or the E side. Hence, $u$ lies on the W side of $R_1$ and $x_1$ is the length of the horizontal side of $R_1$. Now that we know where $u$ lies, we can determine that $h_1$ lies on the N or E side of $R_1$ and $l_1$ lies on the E or S side of $R_1$. This implies that $d_{t}(u, h_{1}) + d_{t}(u,l_{1})+d_{R_1} (h_{1}, l_{1}$) is bounded by the perimeter of $R_1$ which is $(2+2L)x_1$. Thus, $R_1$ has potential, as claimed. 

Next, assume that $R_i$, for $1 \leq i < k$, has potential but is not inductive. Since $R_i$ is not inductive, we know that $(l_{i}, h_{i})$ is steep. In the remainder of this proof, we assume that $x_{l_i} < x_{h_i}$. The case where $x_{l_i} > x_{h_i}$ can be proven using analogous arguments.

Since  $x_{l_i} < x_{h_i}$, $l_i$ must be on the S side of $R_i$ and $h_i$ must be on the N or E side of $R_i$. If $h_i$ is on the N side of $R_i$, then because  $x_{l_i} < x_{h_i}$, $h_i$ must be on the N side of $R_{i+1}$ and $l_i$ must be on the S or W side of $R_{i+1}$. If $l_i$ is on the S side of $R_{i+1}$ (see Figure~\ref{fig:lemma4-case1}), we have  
\begin{equation}
d_{R_{i+1}}(h_i,l_i)-d_{R_i}(h_i,l_i) = 2(x_{i+1}-x_i). \label{eq:equation1}
\end{equation} 

\begin{figure}[ht]
  \begin{minipage}[b]{0.45\linewidth}
    \centering
    \includegraphics[width=0.7\textwidth]{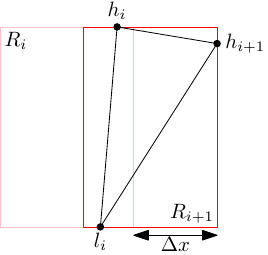}
    \caption{Vertex $h_i$ lies on the N side of $R_i$ and $l_i$ is on the S side of $R_{i+1}$.}
    \label{fig:lemma4-case1}
  \end{minipage}
  \hspace{0.5cm}
  \begin{minipage}[b]{0.45\linewidth}
    \centering
    \includegraphics[width=0.8\textwidth]{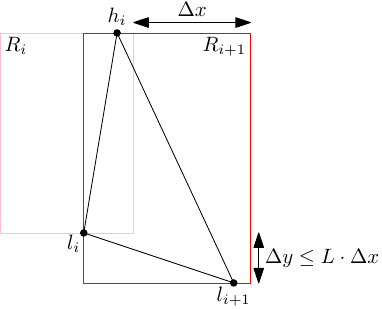}
    \caption{Vertex $h_i$ lies on the N side of $R_i$ and $l_i$ is on the W side of $R_{i+1}$.}
    \label{fig:lemma4-case2}
  \end{minipage}
\end{figure}

If $l_i$ is on the W side of $R_{i+1}$ (see Figure~\ref{fig:lemma4-case2}), let~$\Delta x = x_{i+1} - x_i$, and~$\Delta y = y_{l_{i}} - y_{l_{i+1}}$. We note that the intersection of $R_i$ and $R_{i+1}$ is a rectangle, whose long side has length more than $L$ times the short side's length, since $l_i$ is on the W side of $R_{i+1}$ and $h_i$ is on the N side of both $R_i$ and $R_{i+1}$. Thus, $\Delta y \leq L \cdot \Delta x$. Then we have that 
\begin{equation}
d_{R_{i+1}}(h_i,l_i)-d_{R_i}(h_i,l_i) 
\le
2 (\Delta x + \Delta y)
\le
(2+2L) \Delta x
=
(2+2L) (x_{i+1}-x_i). \label{eq:equation2}
\end{equation} 

If $h_i$ is on the E side of $R_i$, then because $x_{i+1} > x_i$, $h_i$ must be on the N side of $R_{i+1}$ and either $l_i$ is on the S side of $R_{i+1}$ and Equation~\ref{eq:equation1} holds or $l_i$ is on the W side of $R_{i+1}$ and Inequality~\ref{eq:equation2} holds.

It remains to show that the above inequalities imply that $R_{i+1}$ has potential. Since $R_i$ has potential, the above inequalities imply in all cases that: 
\begin{equation*} 
  d_{t}(u,h_{i}) + d_{t}(u,l_{i})+d_{R_{i+1}}(h_{i},l_{i}) \le (2+2L)x_{i+1}. 
\end{equation*}

\begin{figure}[ht]
\centering
\includegraphics{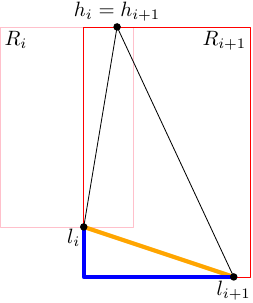}
\caption{Applying the triangle inequality when $h_i = h_{i+1}$.}
\label{fig:lemma4-3}
\end{figure} 

Since we have that either $l_i = l_{i+1}$ or $h_i = h_{i+1}$, the triangle inequality implies that $d_{t}(u,h_{i+1}) + d_{t}(u,l_{i+1})+d_{R_{i+1}}(h_{i+1},l_{i+1}) \le d_{t}(u,h_{i}) + d_{t}(u,l_{i})+d_{R_{i+1}}(h_{i},l_{i})$ (Figure~\ref{fig:lemma4-3} shows the case where $h_i = h_{i+1}$). This implies that 
\begin{equation*} 
d_{t}(u,h_{i+1}) + d_{t}(u,l_{i+1})+d_{R_{i+1}}(h_{i+1},l_{i+1}) \le (2+2L)x_{i+1}, 
\end{equation*}
completing the proof. 
\end{proof}

Next, we bound the distance from $u$ to the inductive point of a rectangle with potential when this inductive point lies on the E side of the rectangle. 

\begin{lemma}
\label{lem:inductiveEast}
If rectangle $R_i$ has potential and its inductive point $c$ ($c = h_i$ or $c = l_i$) lies on the E side of $R_i$, then $d_t (u,c) \le (1+L) x_c$. 
\end{lemma}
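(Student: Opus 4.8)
The plan is to derive the bound by pairing the potential inequality for $R_i$ with a triangle-inequality shortcut that routes a shortest path from $u$ to $c$ through the \emph{other} endpoint of the last line segment of $T_i$. Write $\{c,p\} = \{h_i,l_i\}$, so $p$ is the endpoint of $(h_i,l_i)$ that is not $c$. The first observation is that, since $c$ lies on the E side of $R_i$, we have $x_c = x_i$, the $x$-coordinate appearing in the definition of potential; this is the only place I would use the hypothesis about the E side.

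Next I would record three elementary facts and then combine them. First, the segment $h_il_i$ is a side of the triangle $T_i$, hence an edge of the rectangle Delaunay triangulation, so $d_t(c,p) = d_t(h_i,l_i) \le d_2(h_i,l_i)$. Second, a straight segment between two points is no longer than any polygonal path joining them, in particular the clockwise boundary arc of $R_i$, so $d_2(h_i,l_i) \le d_{R_i}(h_i,l_i)$. Third, the triangle inequality gives $d_t(u,c) \le d_t(u,p) + d_t(c,p)$; chaining the previous two facts yields
\[
  d_t(u,c) \;\le\; d_t(u,p) + d_{R_i}(h_i,l_i).
\]
Now I would invoke the potential of $R_i$, namely $d_t(u,h_i)+d_t(u,l_i)+d_{R_i}(h_i,l_i) \le (2+2L)x_i = (2+2L)x_c$, and note that $d_t(u,h_i)+d_t(u,l_i) = d_t(u,c)+d_t(u,p)$, so that $d_t(u,p)+d_{R_i}(h_i,l_i) \le (2+2L)x_c - d_t(u,c)$. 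Substituting this into the displayed inequality gives $d_t(u,c) \le (2+2L)x_c - d_t(u,c)$, i.e.\ $2\,d_t(u,c) \le (2+2L)x_c$, which is exactly $d_t(u,c) \le (1+L)x_c$.

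I do not expect a serious obstacle here; the step that needs the most care is the structural claim that $h_il_i$ is genuinely an edge of the rectangle Delaunay triangulation (so that $d_t(h_i,l_i)\le d_2(h_i,l_i)$). For $1\le i<k$ this is immediate from the definition of $h_i,l_i$ as the endpoints of a side of $T_i$; for $i=k$ one should unwind the conventions $h_k=v$, $l_k=l_{k-1}$ and check that these two points are still two vertices of the triangle $T_k$ and hence joined by one of its sides. It is also worth double-checking that the quantity $d_{R_i}(h_i,l_i)$, defined as the \emph{clockwise} boundary arc, still dominates the straight segment $h_il_i$ — it does, since it is merely one particular path between the two points, regardless of orientation. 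Finally, I would remark that this argument never uses that $R_i$ is inductive (that $(l_i,h_i)$ is gentle); the gentleness hypothesis is presumably needed only for the companion lemmas handling inductive points that lie on the W, N, or S side.
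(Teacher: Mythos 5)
Your proof is correct and takes essentially the same approach as the paper's: both bound $d_t(u,c)$ by routing through the other endpoint $p$ of $(h_i,l_i)$ via the triangle inequality, using $d_2(h_i,l_i)\le d_{R_i}(h_i,l_i)$ together with the potential of $R_i$ and $x_c=x_i$; the paper packages the last step as a case split (either $d_t(u,c)\le(1+L)x_i$ already, or the remaining half of the potential sum is at most $(1+L)x_i$ and one routes through $p$), whereas you rearrange to $2\,d_t(u,c)\le(2+2L)x_c$, which is the same computation. Your side remarks --- that $(h_i,l_i)$ is genuinely an edge of the triangulation (also needed, implicitly, by the paper) and that gentleness of $(l_i,h_i)$ is never used --- are accurate.
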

\begin{proof}
Assume  without loss of generality that $c = h_i$. Since $R_i$ has potential, $d_{t} (u, h_{i}) + d_{t} (u, l_{i})+d_{R_{i}}(h_{i},l_{i}) \le (2+2L) x_i$. This implies that $d_{t} (u, h_{i}) \le (1+L)x_{i}$ or $d_{t} (u, l_{i}) + d_{R_{i}}(h_{i},l_{i}) \le (1+L)x_{i}$. In the latter case, we use that $d_{t} (u, h_{i}) \le  d_{t} (u, l_{i}) + d_{2} (l_{i}, h_{i}) \le d_{t} (u, l_{i}) + d_{R_{i}}(h_{i},l_{i}) \le (1+L) x_i$ to get $d_{t} (u, h_{i}) \le (1+L)x_{i}$. Hence, in both cases we obtain that $d_{t} (u, h_{i}) \le (1+L)x_{i}$. Since $c$ is on the E side of $R_i$, $x_c = x_i$ and thus $d_t (u,c) \le (1+L) x_c$. 
\end{proof}

Now we shift our focus to paths consisting of gentle edges (see Figure~\ref{fig:maximalPath}). 

\begin{figure}[ht]
\centering
\includegraphics{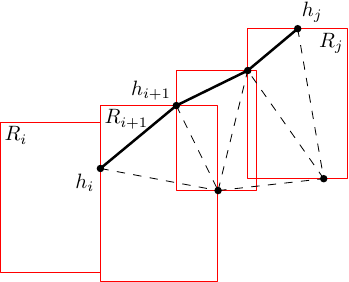}
\caption{An example of a maximal high path (thick edges). The other edges of the triangles are shown using dashed line segments.}
\label{fig:maximalPath}
\end{figure} 

\begin{definition}
\label{def:maximalPath}
If $h_j$ is on the E side of $R_j$, the maximal high path ending at $h_j$ is $h_j$ itself; otherwise, it is the path $h_i, h_{i+1}, ..., h_j$ such that $h_{m}$ is not on the E side of $R_{m}$ (for $i < m \leq j$) and either $i = 0$ or $h_i$ is on the E side of $R_i$. 

If $l_j$ is on the E side of $R_j$, the maximal low path ending at $l_j$ is $l_j$; otherwise, it is the path $l_i, l_{i+1}, ..., l_j$ such that $l_{m}$ is not on the E side of $R_{m}$ (for $i < m \leq j$) and either $i = 0$ or $l_i$ is on the E side of $R_i$. 
\end{definition}

Next, we bound the length of these maximal high and maximal low paths. 

\begin{lemma}
\label{lem:maximalPath}
If the path $h_i, h_{i+1}, ..., h_j$ is a maximal high path then $d_{t}(h_{i},h_{j}) \le (x_{h_j} - x_{h_i}) + (y_{h_j} - y_{h_i})$. Similarly, if the path $l_i, l_{i+1}, ..., l_j$ is a maximal low path then $d_{t}(l_{i},l_{j}) \le (x_{l_j} - x_{l_i}) + (y_{l_i} - y_{l_j})$. 
\end{lemma}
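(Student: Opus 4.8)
The plan is to prove the statement for the maximal high path $h_i, h_{i+1}, \ldots, h_j$ and note that the maximal low path case follows by a symmetric argument (reflecting across $uv$ swaps the roles of high and low, turning $y_{h_j}-y_{h_i}$ into $y_{l_i}-y_{l_j}$). First I would handle the degenerate case: if the path is a single vertex $h_j$ then $d_t(h_j,h_j)=0$ and the inequality is trivial, so assume $i<j$. I would then argue by induction on the length of the path, or equivalently walk along the path from $h_i$ to $h_j$ and bound each edge $(h_{m},h_{m+1})$ individually, showing that $d_t(h_m, h_{m+1}) \le (x_{h_{m+1}} - x_{h_m}) + (y_{h_{m+1}} - y_{h_m})$ and then summing the telescoping series. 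For this it is important to first establish that each $h_m$ with $i \le m < j$ is \emph{not} on the E side of $R_m$ and moreover (using the general-position assumptions and the fact that $h_{m+1}$ lies on a different side than $h_m$) that as $m$ increases both $x_{h_m}$ and $y_{h_m}$ are non-decreasing, so that the right-hand side terms are all non-negative and the Manhattan bound is meaningful.

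The crux is the per-edge bound. For each consecutive pair, either $l_m = l_{m+1}$ (so $h_m$ and $h_{m+1}$ are both vertices of triangle $T_{m+1}$, joined by an edge of the Delaunay triangulation) or $h_m = h_{m+1}$ (in which case $d_t(h_m,h_{m+1})=0$ and there is nothing to prove). In the former case the edge $(h_m, h_{m+1})$ is an actual edge of the rectangle Delaunay triangulation, so $d_t(h_m,h_{m+1}) = d_2(h_m,h_{m+1}) \le d_x(h_m,h_{m+1}) + d_y(h_m,h_{m+1})$ by the triangle inequality (the $L_2$ length of a segment is at most its $L_1$ length). The remaining work is to check that the $x$- and $y$-differences are oriented correctly, i.e. $x_{h_{m+1}} \ge x_{h_m}$ and $y_{h_{m+1}} \ge y_{h_m}$, so that the signed quantities in the statement equal the absolute differences. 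This is where the structural definition of a maximal high path is used: because $h_m$ is not on the E side of $R_m$ (it is on the W or N side, and since $h_m$ is above $uv$ and $T_m$ intersects $uv$, actually the N side), and $h_{m+1}$ lies on $R_{m+1}$ which is ``further right,'' a short case analysis on which sides of $R_{m+1}$ the points $h_m$ and $h_{m+1}$ occupy pins down the signs. I would draw the same picture as Figure~\ref{fig:maximalPath} to organize these cases: $h_m$ on the N side of $R_{m+1}$ forces $h_{m+1}$ on the N or E side with larger $x$ and $\ge$ $y$; the case $h_m$ on the W side of $R_{m+1}$ is excluded or handled similarly.

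Summing over $m$ from $i$ to $j-1$, the path length $d_t(h_i,h_j) \le \sum_m d_t(h_m,h_{m+1}) \le \sum_m \big[(x_{h_{m+1}}-x_{h_m}) + (y_{h_{m+1}}-y_{h_m})\big] = (x_{h_j}-x_{h_i}) + (y_{h_j}-y_{h_i})$, which is exactly the claimed bound; here I use that $d_t$ is a graph metric so it is subadditive along the path. The main obstacle I anticipate is not any inequality but the bookkeeping of the side-configuration case analysis: one must carefully use the general-position assumption (no two vertices on a common vertical or horizontal line), the assumption $L\, d_x(u,v) > d_y(u,v)$ fixing the global orientation, and the definition of ``last line segment'' / the fact that $T_m$ and $T_{m+1}$ share the vertex $h_m=h_{m+1}$ or $l_m=l_{m+1}$, to rule out the degenerate placements and guarantee monotonicity of both coordinates along the path. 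Once monotonicity is secured, the Euclidean-to-Manhattan step and the telescoping sum are routine.
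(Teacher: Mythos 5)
Your overall skeleton is the same as the paper's: bound each edge $(h_m,h_{m+1})$ of the path by its $L_1$ length via the triangle inequality, use the side structure of the maximal path to get the signs right, and telescope (your handling of the degenerate single-vertex path, of the $h_m=h_{m+1}$ repetitions, the observation that $l_m=l_{m+1}$ makes $(h_m,h_{m+1})$ a genuine triangulation edge, and the symmetric low-path case are all fine). The problem is in the step you yourself single out as the crux, namely the monotonicity of both coordinates along the path. You misread Definition~\ref{def:maximalPath}: the ``not on the E side'' condition holds for $h_m$ with $i<m\le j$, not for $i\le m<j$; in fact, by maximality $h_i$ \emph{is} on the E side of $R_i$ unless $i=0$, so your premise ``each $h_m$ with $i\le m<j$ is not on the E side of $R_m$, hence lies on the N side of $R_m$'' is false at $m=i$. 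More importantly, the side of $h_m$ on $R_m$ is not what controls the edge $(h_m,h_{m+1})$; what is needed is the placement of \emph{both} endpoints on $R_{m+1}$, and the relevant hypothesis is that $h_{m+1}$ is not on the E side of $R_{m+1}$.

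Your sketched case analysis is then organized the wrong way around. The configuration you propose to exclude or treat as secondary ($h_m$ on the W side of $R_{m+1}$) is the actual one, while the case you keep ($h_m$ on the N side of $R_{m+1}$ with $h_{m+1}$ on the N or E side and $y_{h_{m+1}}\ge y_{h_m}$) cannot occur: two vertices cannot both lie on the N side by the general position assumption, and if $h_{m+1}$ were on the E side it would have \emph{smaller} $y$-coordinate than an N-side $h_m$ (and is in any case excluded by the definition, since $m+1\le j$). The paper's one-line argument is precisely that every edge of a maximal high path is a WN edge of $R_{m+1}$, with $h_m$ on the W side and $h_{m+1}$ on the N side; this yields $x_{h_m}< x_{h_{m+1}}$ and $y_{h_m}\le y_{h_{m+1}}$, so the signed per-edge bound $d_t(h_m,h_{m+1})\le (x_{h_{m+1}}-x_{h_m})+(y_{h_{m+1}}-y_{h_m})$ is valid and the telescoping sum finishes the proof. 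Deriving this WN structure from the correct hypothesis (about $h_{m+1}$ and $R_{m+1}$) is exactly the piece missing from, and mis-stated in, your proposal.
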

\begin{proof}
By Definition~\ref{def:maximalPath}, none of $h_{i+1}, ..., h_j$ are on the E sides of $R_{i+1}, ..., R_j$ respectively. This implies that all edges on a maximal high path are WN edges. By triangle inequality, $d_{t}(h_{k},h_{k+1}) \le (x_{h_{k+1}} - x_{h_k}) + (y_{h_{k+1}} - y_{h_{k}})$ for $i \leq k < j$. Summing up these terms, we obtain that $d_{t}(h_{i},h_{j}) \le (x_{h_j} - x_{h_i}) + (y_{h_j} - y_{h_i})$. An analogous argument proves that $d_{t}(l_{i},l_{j}) \le (x_{l_j} - x_{l_i}) + (y_{l_i} - y_{l_j})$ in a maximal low path. 
\end{proof}

We now use the above lemmas to prove bounds on the path length from $u$ to the inductive point on the first inductive rectangle (if one exists) when $R(u,v)$ does not contain any vertices. Note that in Property~2 of Lemma~\ref{lem:firstInductive}, we differentiate between~$L=A$ and~$L=1/A$, which is crucial in proving Theorem~\ref{thm:upperBound1}.

\begin{lemma}
\label{lem:firstInductive}
Let $R(u,v)$ not contain any vertices of $P$ and let $(u,v)$ not be an edge of the rectangle Delaunay triangulation. The following properties hold: 
\begin{enumerate}
\item If no rectangle in $R_1, ...,R_k$ is inductive then 
\[d_{t}(u, v) \le (L + \sqrt{L^{2}+1})x+y.\] 
\item Otherwise, let $R_j$ be the first inductive rectangle in the sequence $R_1, ..., R_k$. 
\begin{enumerate}
\item If $h_j$ is the inductive point of $R_j$ and $L=\A$, then 
\[ d_{t}(u,h_{j})+ (y_{h_{j}}-y) \le (\A+\sqrt{\A^2+1})x_{h_j}.\]
\item If $h_j$ is the inductive point of $R_j$ and $L=\tfrac{1}{\A}$, then
\[ d_{t}(u,h_{j})+ \A(y_{h_{j}}-y) \le \left(1+\sqrt{\tfrac{1}{\A^2}+1} \right)x_{h_j}.\]
\item If $l_j$ is the inductive point of $R_j$ and $L=\A$, then
\[d_{t}(u,l_{j}) - y_{l_j} \le (\A + \sqrt{\A^2+1})x_{l_j}.\]
\item If $l_j$ is the inductive point of $R_j$ and $L=\tfrac{1}{\A}$, then
\[d_{t}(u,l_{j}) - \A y_{l_j} \le \left(1 + \sqrt{\tfrac{1}{\A^2}+1} \right)x_{l_j}.\]
\end{enumerate}
\end{enumerate}
\end{lemma}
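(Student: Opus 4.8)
The plan is to prove all four sub-statements by a single induction along the triangle sequence $T_1, T_2, \dots, T_j$, using Lemma~\ref{lem:potential} to carry potential forward and Lemma~\ref{lem:maximalPath} to charge the length of maximal high and low paths. First I would set up the base case: by Lemma~\ref{lem:potential}, since $R(u,v)$ is empty and $(u,v)$ is not an edge, $R_1$ has potential, and more generally every rectangle up to (but not including) the first inductive one passes potential forward, so $R_j$ has potential. The key geometric observation is that all rectangles $R_1, \dots, R_{j-1}$ are steep (not inductive), so by the case analysis in the proof of Lemma~\ref{lem:potential} each edge $(l_i, h_i)$ for $i<j$ is a steep edge, and the sequences $h_0, h_1, \dots, h_{j-1}$ and $l_0, l_1, \dots, l_{j-1}$ decompose into maximal high paths and maximal low paths (Definition~\ref{def:maximalPath}), whose endpoints sit on E sides of their rectangles.

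For part 1 (no inductive rectangle exists), the endpoint is $v = h_k$, and I would combine Lemma~\ref{lem:inductiveEast} applied at the last rectangle whose high (or low) point lies on an E side with Lemma~\ref{lem:maximalPath} for the final maximal path, then bound $y_{h_k} - y_{h_i}$ and $x$-displacements using the steepness constraint $d_y \ge L\, d_x$ on the intervening edges; the $\sqrt{L^2+1}$ term arises from converting a bound involving $x+y$-type quantities into one involving the Euclidean-style weighting, exactly as the $\sqrt{L^2+1}$ in the diagonal slope. For part 2, let $R_j$ be the first inductive rectangle with inductive point $c \in \{h_j, l_j\}$ on its E side (it is on the E side because $c$ has the larger $x$-coordinate and $R_j$ is inductive). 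By Lemma~\ref{lem:inductiveEast}, $d_t(u,c) \le (1+L)x_c$. The four cases then come from: (i) whether $c = h_j$ or $c = l_j$ — this flips the sign of the $y_c$ correction term, since the correction measures how far $c$'s $y$-coordinate has drifted from $v$'s; and (ii) whether $L = A$ or $L = 1/A$ — in the first case $v$ lies on the E side of $R(u,v)$ under the normalization $L\,d_x(u,v) > d_y(u,v)$, in the second the geometry is squashed and one must scale the $y$-term by $A$ to restore the right form. In each case I would start from the potential inequality $d_t(u,h_j) + d_t(u,l_j) + d_{R_j}(h_j,l_j) \le (2+2L)x_j$, split off the half not containing $c$ exactly as in Lemma~\ref{lem:inductiveEast}, and then fold in the maximal-path bound on the $c$-side together with the drift of the non-$c$ endpoint; the term $(y_{h_j} - y)$ (resp.\ $-y_{l_j}$, with the appropriate $A$ factor) is precisely what the maximal high (resp.\ low) path before $R_j$ contributes on top of the horizontal displacement, via Lemma~\ref{lem:maximalPath}.

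The main obstacle I expect is the bookkeeping that reconciles the ``one maximal path'' picture with the ``potential'' accounting across the possibly many non-inductive steep rectangles preceding $R_j$: one must argue that the high point and the low point each either stay on a single maximal path or jump to an E side exactly when a steep edge forces it, and that the $x$- and $y$-displacements telescope cleanly so that the final bound is in terms of $x_{h_j}$ (or $x_{l_j}$) alone rather than all the intermediate $x_i$'s. A secondary subtlety is the constant-chasing: verifying that $(1+L)x_c$ from Lemma~\ref{lem:inductiveEast} plus the maximal-path term really collapses to $(\A + \sqrt{\A^2+1})x_{h_j}$ when $L = \A$ and to $(1 + \sqrt{1/\A^2 + 1})x_{h_j}$ when $L = 1/\A$ — this is where the two cases on $L$ genuinely diverge, because the diagonal slope $L$ enters both the steepness threshold and the normalization of $v$, and the algebra must be done separately. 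Once these two points are handled, assembling the four cases is routine application of the triangle inequality together with Lemmas~\ref{lem:potential}, \ref{lem:inductiveEast}, and~\ref{lem:maximalPath}.
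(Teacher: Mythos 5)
The central step of your Property~2 argument fails. You claim that the inductive point $c$ of the first inductive rectangle $R_j$ lies on the E side of $R_j$ ``because $c$ has the larger $x$-coordinate and $R_j$ is inductive,'' and then invoke Lemma~\ref{lem:inductiveEast} at $R_j$ to get $d_t(u,c)\le(1+L)x_c$. This is false: inductive only means the edge $(l_j,h_j)$ is gentle, and the endpoint with larger $x$-coordinate can lie on the N (or S) side of $R_j$, e.g.\ $l_j$ on the W side and $h_j$ on the N side joined by a gentle edge, so Lemma~\ref{lem:inductiveEast} is not applicable at $R_j$. Your fallback, splitting the potential inequality $d_t(u,h_j)+d_t(u,l_j)+d_{R_j}(h_j,l_j)\le(2+2L)x_j$, only produces bounds in terms of $x_j$, the $x$-coordinate of the E side of $R_j$, which can be much larger than $x_{h_j}$ or $x_{l_j}$, so it cannot yield inequalities in terms of $x_{h_j}$ or $x_{l_j}$ as stated. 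Even in the lucky case where $c$ does lie on the E side, $(1+L)x_c$ together with the leftover term $y_{h_j}-y$ (which can be as large as $Lx_{h_j}$) gives only $(1+2L)x_{h_j}$, which exceeds the target $(L+\sqrt{L^2+1})x_{h_j}$; the tight constant cannot be recovered along this route.

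The paper's proof assembles the same three lemmas differently, and the differences are exactly what your sketch is missing. For Property~2a it takes the maximal \emph{low} path $l_i,\dots,l_{j-1}=l_j$ ending at the \emph{non}-inductive endpoint (you propose using the maximal path on the $c$-side, which is the wrong side); its start $l_i$ is either $u$ or on the E side of $R_i$, and $R_i$ has potential by Lemma~\ref{lem:potential}, so Lemma~\ref{lem:inductiveEast} applies there and gives $d_t(u,l_i)\le(1+L)x_{l_i}$. Then $d_t(u,h_j)\le d_t(u,l_i)+d_t(l_i,l_j)+d_2(l_j,h_j)$, the middle term is bounded by Lemma~\ref{lem:maximalPath}, and the last term and the vertical offset are controlled by gentleness of $(l_j,h_j)$: $d_2(l_j,h_j)\le\sqrt{1+L^2}\,(x_{h_j}-x_{l_j})$ and $y_{h_j}-y_{l_j}\le L(x_{h_j}-x_{l_j})$. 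Crucially, the emptiness of $R(u,v)$ forces $y_{l_i}<0$ (resp.\ $y_{h_i}>0$ in 2c/2d), and this is what absorbs the starting height and collapses the bound to $(L+\sqrt{L^2+1})x_{h_j}$; when $L=\frac{1}{\A}$ one additionally multiplies the $y$-inequalities by $\A$ and uses $\A\ge 1$ and $\frac{1}{\A}x_{l_i}\le x_{l_j}$. Your proposal never uses the emptiness of $R(u,v)$ beyond invoking Lemma~\ref{lem:potential}, and without it the claimed constants are unobtainable. Finally, Property~1 is much simpler than your description: the last rectangle has potential, $v$ lies on its E side, so Lemma~\ref{lem:inductiveEast} gives $d_t(u,v)\le(1+L)x\le(L+\sqrt{L^2+1})x+y$; no maximal-path or steepness argument is needed, and the $\sqrt{L^2+1}$ there is just slack, not a converted Euclidean weighting.
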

\begin{proof}
\emph{Property 1:} By Lemma~\ref{lem:potential}, if no rectangle in $R_1, ..., R_k$ is inductive then the last rectangle must have potential since $R_1$ has potential. Since no two vertices have the same $y$-coordinate, $v$ must lie on the E side of the last rectangle. Thus, we can use Lemma~\ref{lem:inductiveEast} to conclude that $d_{t}(u, v) \le (1+L)x \le (L + \sqrt{L^{2}+1})x+y.$ 

\emph{Property 2a:} We consider the situation where $R_j$ is the first inductive rectangle in the sequence $R_1, ..., R_k$. Let $l_i, ..., l_{j-1} = l_j$ be the maximal low path ending at $l_j$, and recall that $h_j$ is the inductive point of $R_j$. By Lemma~\ref{lem:potential} we know that $R_i$ has potential, since $R_1$ has potential and no rectangle before $R_i$ is inductive. Since $R_i$ has potential and $l_i$ is on the E side of $R_i$, by Lemma~\ref{lem:inductiveEast} we know $d_{t}(u,l_{i}) \le (1+L)x_{l_i}$. See Figure~\ref{fig:lemma8}. Since $L=\A$, we have
\begin{align*}
d_{t}(u,h_{j}) + (y_{h_{j}}-y) &\le d_{t}(u,l_{i}) + d_{t}(l_{i},l_{j}) + d_{2}(l_{j},h_{j}) + (y_{h_{j}}-y)\\
&\le (1+\A)x_{l_i} + d_{t}(l_{i},l_{j}) + d_{2}(l_{j},h_{j}) + y_{h_{j}}.
\end{align*}

\begin{figure}[htb]
    \centering
    \includegraphics[width=.8\textwidth]{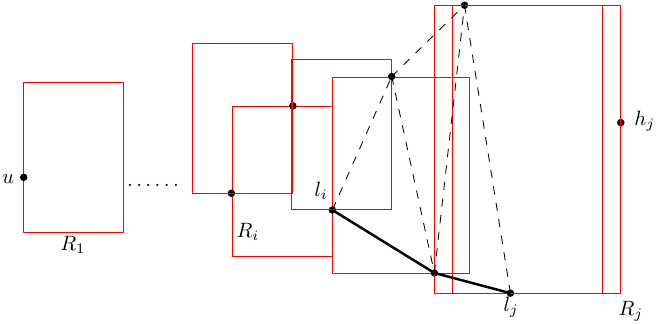}
    \caption{In Property $2a$, $R_j$ is the first inductive rectangle, $h_j$ is the inductive point of $R_j$, and $l_i, ..., l_{j-1} = l_j$ is the maximal low path ending at $l_j$. $R_i$ has potential and $l_i$ is on the E side of $R_i$.}
    \label{fig:lemma8}
\end{figure}

Since $l_i, ..., l_{j-1} = l_j$ is a maximal low path, by Lemma~\ref{lem:maximalPath} we know $d_{t}(l_{i},l_{j}) \le (x_{l_j} - x_{l_i}) + (y_{l_i} - y_{l_j})$. Hence, we obtain that:
\begin{align*}
d_{t}(u,h_{j}) + (y_{h_{j}}-y) &\le (1+\A)x_{l_i} + (x_{l_j} - x_{l_i}) + (y_{l_i} - y_{l_j}) + d_{2}(l_{j},h_{j}) + y_{h_{j}} \\
&= \A x_{l_i} + x_{l_j} + (y_{l_i} - y_{l_j}) + d_{2}(l_{j},h_{j}) + y_{h_{j}}.
\end{align*}

Because $R_j$ is inductive, we know that edge $(l_j, h_j)$ is gentle. Therefore, $y_{h_j}-y_{l_j}<\A (x_{h_j}-x_{l_j})$ and $d_{2}(l_{j},h_{j}) \le \sqrt{1+\A^2}(x_{h_j}-x_{l_j})$. Thus:
\begin{align*}
d_{t}(u,h_{j}) + (y_{h_{j}}-y) &\le \A x_{l_i} + x_{l_j} + (y_{l_i} - y_{l_j}) + \sqrt{1+\A^2}(x_{h_j}-x_{l_j}) + y_{h_{j}} \\
&\le \A x_{l_i} + (y_{l_i} - y_{l_j}) + \sqrt{1+\A^2} x_{h_j} + y_{h_{j}}.
\end{align*}

Furthermore, again because edge $(l_j, h_j)$ is gentle, we have that $y_{h_j} - y_{l_j} \le \A(x_{h_j}-x_{l_j})$ and therefore: 
\begin{align*}
d_{t}(u,h_{j}) + (y_{h_{j}}-y) &\le \A x_{l_i} + y_{l_i} + \sqrt{1+\A^2} x_{h_j} + \A(x_{h_j}-x_{l_j}) \\
&\le y_{l_i} + \sqrt{1+\A^2} x_{h_j} + A x_{h_j}.
\end{align*}

Note that when $i\leq j$, then~$x_{l_i}\leq x_{l_j}$. Since $R(u,v)$ is empty, $l_i$ must lie below it. Thus $y_{l_i} < 0$, which leads to $d_{t}(u,h_{j}) + (y_{h_{j}}-y) \le (\A + \sqrt{\A^2 + 1}) x_{h_j}$, as required.

\emph{Property 2b:} Let $R_j$ be the first inductive rectangle in the sequence $R_1, ..., R_k$. Let $l_i, ..., l_{j-1} = l_j$ be the maximal low path ending at $l_j$, and recall that $h_j$ is the inductive point of $R_j$. By Lemma~\ref{lem:potential}, $R_i$ has potential, and by Lemma~\ref{lem:inductiveEast}, we have $d_{t}(u,l_{i}) \le (1+L)x_{l_i}$. Since $L=\tfrac{1}{\A}$, we have
\begin{align*}
d_{t}(u,h_{j}) + \A(y_{h_{j}}-y) &\le d_{t}(u,l_{i}) + d_{t}(l_{i},l_{j}) + d_{2}(l_{j},h_{j}) + \A(y_{h_{j}}-y)\\
&\le (1+\tfrac{1}{\A})x_{l_i} + d_{t}(l_{i},l_{j}) + d_{2}(l_{j},h_{j}) + \A y_{h_{j}}.
\end{align*}

Since $l_i, ..., l_{j-1} = l_j$ is a maximal low path, by Lemma~\ref{lem:maximalPath} we know $d_{t}(l_{i},l_{j}) \le (x_{l_j} - x_{l_i}) + (y_{l_i} - y_{l_j})$. Because $R_j$ is inductive, we know that edge $(l_j, h_j)$ is gentle. Therefore, $d_{2}(l_{j},h_{j}) \le \sqrt{1+\tfrac{1}{\A^2}}(x_{h_j}-x_{l_j})$ and thus:
\begin{align*}
d_{t}(u,h_{j}) + \A(y_{h_{j}}-y) 
&\le (1+\tfrac{1}{\A})x_{l_i} + (x_{l_j}-x_{l_i}) + (y_{l_i}-y_{l_j}) + d_{2}(l_{j},h_{j}) + \A y_{h_{j}} \\
&\le (1+\tfrac{1}{\A})x_{l_i} + (x_{l_j}-x_{l_i}) + (y_{l_i}-y_{l_j}) + \sqrt{1+\tfrac{1}{\A^2}}(x_{h_j}-x_{l_j}) + \A y_{h_{j}} \\
&\le \tfrac{1}{\A}x_{l_i} + (y_{l_i}-y_{l_j}) + \sqrt{1+\tfrac{1}{\A^2}}x_{h_j} + \A y_{h_{j}}.
\end{align*}

Again because edge $(l_j, h_j)$ is gentle, we have that $y_{h_j} - y_{l_j} \le \tfrac{1}{\A}(x_{h_j}-x_{l_j})$. Therefore $\A(y_{h_j} - y_{l_j}) \le (x_{h_j}-x_{l_j})$. We have $\A \geq 1$ and therefore:
\begin{align*}
d_{t}(u,h_{j}) + \A(y_{h_{j}}-y) &\le \tfrac{1}{\A}x_{l_i} + \A (y_{l_i}-y_{l_j}) + \sqrt{1+\tfrac{1}{\A^2}}x_{h_j} + \A y_{h_{j}} \\
&\le \tfrac{1}{\A} x_{l_i} + \A y_{l_i} + (x_{h_j}-x_{l_j}) + \sqrt{1+\tfrac{1}{\A^2}}x_{h_j}.
\end{align*}

Since $\tfrac{1}{\A}\le 1$ and~$i \leq j$, we have $\tfrac{1}{\A} x_{l_i}\le x_{l_i}\le x_{l_j}$. Therefore
\begin{align*}
d_{t}(u,h_{j}) + \A(y_{h_{j}}-y) \le \A y_{l_i} + x_{h_j} + \sqrt{1+\tfrac{1}{\A^2}}x_{h_j}
\le (1+\sqrt{\tfrac{1}{\A^2}+1})x_{h_j}.
\end{align*}

\emph{Property 2c:} Let $R_j$ be the first inductive rectangle in the sequence $R_1, ..., R_k$. Now, let $h_i, ..., h_{j-1} = h_j$ be the maximal high path ending at $h_j$, and recall that $l_j$ is the inductive point of $R_j$. By Lemma~\ref{lem:potential}, $R_i$ has potential, and by Lemma~\ref{lem:inductiveEast}, we have $d_{t}(u,h_{i}) \le (1+L)x_{h_i}$. Since $L=\A$, 
\begin{align*}
d_{t}(u,l_j) - y_{l_j} &\le d_{t}(u,h_i) + d_{t}(h_i,h_j) + d_2(h_j,l_j) - y_{l_j} \\
&\le (1+\A)x_{h_i} + d_{t}(h_i,h_j) + d_2(h_j,l_j) - y_{l_j}. 
\end{align*}

Since $h_i, ..., h_{j-1} = h_j$ is a maximal high path, by Lemma~\ref{lem:maximalPath} we know $d_{t}(h_{i},h_{j}) \le (x_{h_j} - x_{h_i}) + (y_{h_j} - y_{h_i})$. It follows that:
\begin{align*}
d_{t}(u,l_j) - y_{l_j} &\le (1+\A)x_{h_i} + (x_{h_j} - x_{h_i}) + (y_{h_j} - y_{h_i}) + d_2(h_j,l_j) - y_{l_j} \\
&= \A x_{h_i} + x_{h_j} + (y_{h_j} - y_{h_i}) + d_2(h_j,l_j) - y_{l_j}.
\end{align*}

Because $R_j$ is inductive, we know that edge $(l_j, h_j)$ is gentle. Therefore, $d_{2}(h_{j},l_{j}) \le \sqrt{1+\A^2}(x_{l_j}-x_{h_j})$ and thus:
\begin{align*}
d_{t}(u,l_j) - y_{l_j} &\le \A x_{h_i} + x_{h_j} + (y_{h_j} - y_{h_i}) + \sqrt{1+\A^2}(x_{l_j}-x_{h_j}) - y_{l_j} \\
&\le \A x_{h_i} + (y_{h_j} - y_{h_i}) + \sqrt{1+\A^2}x_{l_j} - y_{l_j}.
\end{align*}

Furthermore, again because edge $(l_j, h_j)$ is gentle, we have that $y_{h_j} - y_{l_j} \le \A(x_{l_j}-x_{h_j})$ and therefore: 
\begin{align*}
d_{t}(u,l_j) - y_{l_j} &\le \A x_{h_i} - y_{h_i} + \sqrt{1+\A^2}x_{l_j} + \A(x_{l_j}-x_{h_j}) \\
&\le - y_{h_i} + \sqrt{1+\A^2}x_{l_j} + \A x_{l_j}.
\end{align*}

Since $R(u,v)$ is empty, $h_i$ must lie above it. Thus $y_{h_i} > 0$, which leads to $d_{t}(u,l_j) - y_{l_j} \le (\A + \sqrt{\A^2+1})x_{l_j}$, as required.

\emph{Property 2d:} Let $R_j$ be the first inductive rectangle in the sequence $R_1, ..., R_k$. Now, let $h_i, ..., h_{j-1} = h_j$ be the maximal high path ending at $h_j$, and recall that $l_j$ is the inductive point of $R_j$. By Lemma~\ref{lem:potential}, $R_i$ has potential, and by Lemma~\ref{lem:inductiveEast}, we have $d_{t}(u,h_{i}) \le (1+L)x_{h_i}$.  Since $L=\tfrac{1}{\A}$, 
\begin{align*}
d_{t}(u,l_j) - \A y_{l_j} &\le d_{t}(u,h_i) + d_{t}(h_i,h_j) + d_2(h_j,l_j) - \A y_{l_j} \\
&\le (1+\tfrac{1}{\A})x_{h_i} + d_{t}(h_i,h_j) + d_2(h_j,l_j) - \A y_{l_j}.
\end{align*}

Since $h_i, ..., h_{j-1} = h_j$ is a maximal high path, by Lemma~\ref{lem:maximalPath} we know $d_{t}(h_{i},h_{j}) \le (x_{h_j} - x_{h_i}) + (y_{h_j} - y_{h_i})$. Because edge $(l_j,h_j)$ is gentle,  we have that $d_2(h_j,l_j) \le \sqrt{1+\tfrac{1}{\A^2}}(x_{l_j}-x_{h_j})$. It follows that:
\begin{align*}
d_{t}(u,l_j) - \A y_{l_j} &\le (1+\tfrac{1}{\A})x_{h_i} + (x_{h_j}-x_{h_i}) + (y_{h_j}-y_{h_i}) + d_2(h_j,l_j) - \A y_{l_j}\\
&\le (1+\tfrac{1}{\A})x_{h_i} + (x_{h_j}-x_{h_i}) + (y_{h_j}-y_{h_i}) + \sqrt{1+\tfrac{1}{\A^2}}(x_{l_j}-x_{h_j}) - \A y_{l_j} \\
&\le \tfrac{1}{\A}x_{h_i} + (y_{h_j}-y_{h_i}) + \sqrt{1+\tfrac{1}{\A^2}}x_{l_j} - \A y_{l_j}.
\end{align*}

Again because edge $(l_j, h_j)$ is gentle, we have that $y_{h_j} - y_{l_j} \le \tfrac{1}{\A}(x_{l_j}-x_{h_j})$. Therefore $\A(y_{h_j} - y_{l_j}) \le (x_{l_j}-x_{h_j})$ and \begin{align*}
d_{t}(u,l_j) - \A y_{l_j} &\le \tfrac{1}{\A}x_{h_i} + \A(y_{h_j}-y_{h_i}) + \sqrt{1+\tfrac{1}{\A^2}}x_{l_j} - \A y_{l_j} \\
&\le \tfrac{1}{\A}x_{h_i} + (x_{l_j}-x_{h_j}) - \A y_{h_i} + \sqrt{1+\tfrac{1}{\A^2}}x_{l_j}.
\end{align*}

Since $\tfrac{1}{\A}\le 1$, we have $\tfrac{1}{\A}x_{h_i}\le x_{h_i}\le  x_{h_j}$. Thus
\begin{align*}
d_{t}(u,l_j) - \A y_{l_j} &\le x_{l_j} - \A y_{h_i} + \sqrt{1+\tfrac{1}{\A^2}}x_{l_j} \\
&\le (1+\sqrt{\tfrac{1}{\A^2}+1})x_{l_j}.
\end{align*}
as required, completing our proof of Property 1, $2a$, $2b$, $2c$ and $2d$.
\end{proof}

Our final ingredient determines the types of edges we can encounter when the $y$-coordinate of a vertex differs significantly from that of $v$. Recall that $v=(x,y)$. 

\begin{lemma}
\label{lem:NEDown}
Let $R(u,v)$ not contain any vertices of $P$ and let the coordinates of the inductive point $c$ of $R_i$ be such that it satisfies $0 < L (x - x_c) < |y - y_{c}|$. 
\begin{itemize}
\item If  $c = h_i$ and thus $0 < L (x - x_c) < y_{c} - y$, then let $j$ be the smallest index larger than $i$ such that $L (x - x_{h_j}) \ge y_c - y \ge 0$. All edges on the path $h_i, ..., h_{j}$ are NE edges. 
\item If $c = l_i$ and thus $0 < L (x - x_c) < y - y_{c}$, then let $j$ be the smallest index larger than $i$ such that $L (x - x_{l_j}) \ge y - y_c \ge 0$. All edges on the path $l_i, ..., l_{j}$ are SE edges. 
\end{itemize}

\end{lemma}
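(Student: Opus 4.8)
The plan is to prove the first bullet ($c=h_i$) in detail; the second ($c=l_i$) follows by the symmetric argument, in which the roles of $h$ and $l$, of the N and S sides of every rectangle, of ``above $uv$'' and ``below $uv$,'' and of NE and SE edges are interchanged. So assume $c=h_i$. I would argue by induction on the index $m$, running from $i$ up to $j$, and show for each $m$ with $i<m\le j$ that the step $h_{m-1}\to h_m$ is either degenerate ($h_{m-1}=h_m$, contributing no edge to the path) or an NE edge. Alongside this I would carry an inductive hypothesis with two parts: (a) \emph{the high point has not yet escaped}: if $m<j$ then $h_m$ satisfies $L(x-x_{h_m})<y_c-y$ and $y_{h_m}\ge y_c$; and (b) \emph{a positional statement}: $h_m$ lies on the E side of the next rectangle $R_{m+1}$, or else on the N side of $R_{m+1}$ with $l_m$ on its W side. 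Part~(b) is the engine of the argument: it limits which side of $R_{m+1}$ can carry the third vertex of $T_{m+1}$, and that limitation is exactly what forces the next high-path edge to be NE.

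For the base case $m=i$ I would use that $R_i$ is inductive, so the edge $(l_i,h_i)$ is gentle, and that $c=h_i$, which by definition of the inductive point gives $x_{l_i}<x_{h_i}$. Arguing as in the proof of Lemma~\ref{lem:potential}, the emptiness of $R(u,v)$ together with the position of $v$ forces $h_i$ and $l_i$ onto particular sides of $R_i$; since $(l_i,h_i)$ is the edge shared by $T_i$ and $T_{i+1}$, this pins their sides on $R_{i+1}$ as well, which is part~(b) for $m=i$, while part~(a) for $m=i$ is just the hypothesis $0<L(x-x_c)<y_c-y$.

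For the inductive step let $w$ be the third vertex of $T_{m+1}$, so $T_{m+1}$ has vertices $h_m,l_m,w$ and shares the edge $h_m l_m$ with $T_m$. If $w$ lies below $uv$ then $h_{m+1}=h_m$, only the low point moves, and I would check that (a) and (b) transfer to $m+1$. If $w$ lies above $uv$ then $h_{m+1}=w$, and the edge $(h_m,w)$, being a side of $T_{m+1}$, lies on $\partial R_{m+1}$. I would then reason on which sides of $R_{m+1}$ the three points lie: since $R_{m+1}$ straddles $uv$, the point $l_m$ is on the S or W side, while $h_m$ and $w$ avoid the S side, so part~(b) leaves only two possibilities for $w$ once $h_m$'s side is fixed — the N side, which makes $(h_m,w)$ the NE edge we want, or the W side, which would make it an NW or a WE edge. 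Ruling out the W side is the crux, and the one place where hypothesis~(a) and the emptiness of $R(u,v)$ are both genuinely used: I would show that $w$ on the W side forces $R_{m+1}$ to reach from below $uv$ up to the height of $w$ while lying entirely to the left of $h_m$, and then derive a contradiction with emptiness — $R_{m+1}$ would have to contain $h_m$ or $v$ in its interior, or $uv$ would have to leave $R_{m+1}$, or an earlier empty rectangle of the sequence would be violated. With the W side excluded, $(h_m,h_{m+1})$ is an NE edge; a short check on where $w$ sits on $R_{m+1}$ then re-establishes (a) and (b) for $m+1$ (so in particular $h_{m+1}$ is still too high), unless $L(x-x_{h_{m+1}})\ge y_c-y$, in which case $m+1=j$ and we are done.

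By the induction, every edge on the path $h_i,\dots,h_j$ is an NE edge, which is the claim; well-definedness of $j$ (that the threshold $y_c-y$ is actually met at some index $\le k$) follows from the structure of the walk, since a high path of NE edges makes no lasting progress toward $v$. I expect the main obstacle to be precisely the geometric exclusion above: proving that while $h_m$ is still too high, the third vertex of $T_{m+1}$, when it lies above $uv$, cannot land on the W side of $R_{m+1}$. The symmetric argument, with S, $l$, ``below,'' and SE in place of N, $h$, ``above,'' and NE, settles the case $c=l_i$.
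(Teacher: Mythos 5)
Your overall strategy (walk along the high path, classify each edge using that both endpoints lie above $uv$ and that $x$ increases, and exclude a W-side configuration via emptiness) is the same as the paper's, but the two load-bearing steps of your induction do not hold as stated. First, your invariant (a) claims $y_{h_m}\ge y_c$, and this monotonicity goes the wrong way: at any genuine step of the high path the edge $(h_m,h_{m+1})$ has $h_m$ on the N side and $h_{m+1}$ on the E side of $R_{m+1}$ (the orientation with $h_m$ on E is impossible once $x_{h_m}<x_{h_{m+1}}$), so $y_{h_{m+1}}<y_{h_m}$ strictly by general position. Hence $y_{h_m}\le y_c$ for all $m$, with equality only before the high point has moved, and your mechanism for upgrading $L(x-x_{h_m})<y_c-y$ to the inequality you actually need, $L(x-x_{h_m})<y_{h_m}-y$, collapses after the first NE step. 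The quantity that must be propagated is $0<L(x-x_{h_m})<y_{h_m}-y$, measured against the \emph{current} high point; this is exactly how the lemma is invoked in Theorem~\ref{thm:upperBound1}, where $j$ is the smallest index with $L(x-x_{h_j})\ge y_{h_j}-y\ge 0$, so that the intermediate bound comes directly from the minimality of $j$ rather than from any comparison with $y_c$. (Relatedly, your existence argument for $j$ --- ``a high path of NE edges makes no lasting progress toward $v$'' --- is not a proof; existence is immediate because $h_k=v$ satisfies the stopping condition with both sides equal to $0$.)

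Second, the step you yourself identify as the crux --- excluding the W side --- is not actually carried out, and it is aimed at the wrong endpoint. You try to forbid the new vertex $w=h_{m+1}$ from the W side of $R_{m+1}$ and then list candidate contradictions, one of which (``$R_{m+1}$ lying entirely to the left of $h_m$'') cannot occur at all, since $h_m$ is a vertex of $T_{m+1}$ and hence lies on the boundary of $R_{m+1}$. The paper's argument excludes $h_m$ (not $w$) from the W side, by the quantitative step that is precisely where the hypothesis enters: both $h_m$ and $h_{m+1}$ lie above $uv$ and $x_{h_m}<x_{h_{m+1}}$, so the edge is NE, WN, or WE; if $h_m$ were on the W side, then since $y_{l_m}<y$ and $L(x-x_{h_m})<y_{h_m}-y<y_{h_m}-y_{l_m}$, the vertical extent of $R_{m+1}$ forces its horizontal side to exceed $x-x_{h_m}$ while $R_{m+1}$ straddles height $y$, so $v$ would lie in the interior of the empty rectangle $R_{m+1}$, a contradiction. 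Without this computation (or an equivalent), the classification of the edges as NE is not established; your positional hypothesis (b) is also off, since it excludes the legitimate configuration with $h_m$ on the N side and $l_m$ on the S side of $R_{m+1}$. In short, the proposal has the right skeleton but the invariant it relies on is false and the decisive geometric exclusion is missing.
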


\begin{proof}
We consider the case where $c = h_i$. The case where $c = l_i$ can be proven using an analogous argument. We first observe that there exists a $j$ satisfying $L (x - x_{h_j}) \ge y_c - y \ge 0$, since if we pick $v = h_j$, then $L (x - x_{h_j}) = y - y_c = 0$. 

In the remainder, let $j$ be the smallest index larger than $i$ such that $L (x - x_{h_j}) \ge y_c - y \ge 0$. Let $h_{m}$ be any point on the path $ h_i, ..., h_{j-1}$ and thus we have that $0 < L (x - x_{h_{m}}) < y_{h_{m}} - y$. We observe that edge $(h_m, h_{m+1})$ must be a NE, WN, or WE edge in $R_{m+1}$, since $x_{h_{m}} < x_{h_{m+1}}$ and neither vertex is below $uv$. However, if $h_{m}$ lies on the W side of $R_{m+1}$, then $v$ would be inside $R_{m+1}$ because $y_{l_m} < y$ and $0 < L (x - x_{h_{m}}) < y_{h_{m}} - y < y_{h_{m}} - y_{l_{m}}$. Thus, $(h_{m},h_{m+1})$ must be a NE edge. 
\end{proof}

We now have all the ingredients needed to prove our main result. 
Recall that, up to Lemma~\ref{lem:NEDown}, the $(x,y)$-coordinate system is fixed so that $L d_x(u,v) > d_y(u,v)$, i.e. $Lx \geq y$. However, for ease of exposition, in Theorem~\ref{thm:upperBound1} we instead fix the $(x,y)$-coordinate system so that all the homothet rectangles have their vertical sides being the long sides. 

Note that in Lemma~\ref{lem:firstInductive}, we obtain different upper bounds depending on whether~$L=A$ or~$L=1/A$. These two cases must be treated differently for the inductive proof of Theorem~\ref{thm:upperBound1} to hold. In particular, in Theorem~\ref{thm:upperBound1} the bound for $\A d_x(u,v) \ge d_y(u,v)$ does not coincide with the rotated version of the bound for $\A d_x(u,v) < d_y(u,v)$.

\begin{theorem}
\label{thm:upperBound1}
Let $u, v$ be any two vertices in the rectangle Delaunay triangulation. If $\A d_x(u,v) \ge d_y(u,v)$, then \[ d_t(u,v) \le (\A +\sqrt{\A^2+1})x + y. \]  
Otherwise, \[ d_t(u,v) \le \A x + \left( 1 +\sqrt{\tfrac{1}{\A ^2}+1} \right)y. \] 
\end{theorem}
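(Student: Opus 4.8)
The plan is to combine Lemmas~\ref{lem:potential}--\ref{lem:NEDown} in a case analysis driven by the position of the first inductive rectangle and the vertical displacement of its inductive point relative to $v$. Since Theorem~\ref{thm:upperBound1} only concerns the two possibilities $\A d_x(u,v) \ge d_y(u,v)$ (which after fixing coordinates with the long sides vertical corresponds to $L = \A$ and $Lx \ge y$) and its negation (which corresponds to $L = 1/\A$ and $Lx \ge y$ after the coordinate swap), I would first reduce to the situation where $R(u,v)$ is empty and $(u,v)$ is not an edge: if $(u,v)$ is an edge the bound is immediate, and if $R(u,v)$ contains a vertex one would want a separate inductive argument on a smaller instance — I expect the paper handles this by induction on the number of vertices, invoking the theorem recursively on the subproblem inside $R(u,v)$ (this is a standard ``empty rectangle'' reduction, so I will assume it can be carried out and focus on the empty case).

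With $R(u,v)$ empty, apply Lemma~\ref{lem:firstInductive}. If no rectangle is inductive, Property~1 of that lemma gives $d_t(u,v) \le (L+\sqrt{L^2+1})x + y$ directly, which is exactly the claimed bound in both cases (with $L=\A$, resp.\ $L=1/\A$, after noting that in the $L=1/\A$ case the roles of $x$ and $y$ have been swapped, so $(1/\A + \sqrt{1/\A^2+1})x + y$ becomes, after multiplying through appropriately, $\A x + (1+\sqrt{1/\A^2+1})y$). Otherwise let $R_j$ be the first inductive rectangle with inductive point $c$ (either $h_j$ or $l_j$), and split on whether $0 < L(x - x_c) < |y - y_c|$ or not. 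In the ``not'' case, $c$ is close enough in $y$ to $v$ that one can bound $d_t(c,v)$ by going essentially straight to $v$ (e.g.\ $d_t(c,v) \le d_2(c,v) \le$ something like $\sqrt{L^2+1}(x - x_c)$ when the remaining displacement is gentle, or handled by another short inductive step), and adding the Lemma~\ref{lem:firstInductive} bound on $d_t(u,c) + (\text{sign})(y_c - y)$ telescopes to the desired inequality. In the ``yes'' case, use Lemma~\ref{lem:NEDown}: the path $c = h_i, \dots, h_j$ (or the low analogue) consists entirely of NE (resp.\ SE) edges, so its $d_t$-length is bounded by the sum of $x$- and $y$-increments, i.e.\ $(x_{h_j} - x_c) + (y_c - y_{h_j})$; at index $j$ we have $L(x - x_{h_j}) \ge y_{h_j} - y \ge 0$, which pushes us back into the ``close in $y$'' regime, after which we finish as before.

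The arithmetic glue in each case is to take the relevant inequality from Lemma~\ref{lem:firstInductive} — of the form $d_t(u,c) + \sigma(y_c - y) \le (L + \sqrt{L^2+1})x_c$ with $\sigma = \pm 1$ (or the $\A$-scaled variants for $L=1/\A$) — add a bound on the tail $d_t(c,v)$ in terms of $x - x_c$ and $y - y_c$, and check that the sum collapses to $(L+\sqrt{L^2+1})x + y$. The weights on the $y$-terms in the four sub-statements of Lemma~\ref{lem:firstInductive} are precisely chosen so that these telescopes close, so the computation should be mechanical; the care needed is in bounding $d_t(c,v)$ correctly, which is where Lemma~\ref{lem:NEDown} and possibly a final appeal to the inductive hypothesis (on the pair $(c,v)$, which lies in an empty rectangle) come in.

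I expect the main obstacle to be the bookkeeping around the tail path from the inductive point $c$ to $v$: showing that after the NE/SE path of Lemma~\ref{lem:NEDown} we are genuinely in the regime where the remaining hop is gentle (or covered by induction), and that the constants $L + \sqrt{L^2+1}$ and the $y$-coefficient line up exactly — in particular handling the $L = 1/\A$ case cleanly, where the asymmetry between $x$ and $y$ means one must track which coordinate plays which role and confirm that $\A x + (1 + \sqrt{1/\A^2+1})y$ is really what the telescoped inequality yields. The potential definition $(2+2L)x_i$ was tuned to make $R_1$'s potential equal its perimeter, and the inductive-point bounds were tuned to $(1+L)x_c$; I anticipate the final theorem's constant $(L+\sqrt{L^2+1})$ emerges from combining the $(1+L)$ head bound with the $\sqrt{L^2+1}$ factor coming from the single gentle diagonal edge $(l_j,h_j)$, so the proof is really about verifying that exactly one such $\sqrt{L^2+1}$ factor is ever incurred.
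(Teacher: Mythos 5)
Your skeleton for the empty case matches the paper's (Lemma~\ref{lem:firstInductive} for the head, Lemma~\ref{lem:NEDown} plus a telescoping NE/SE path for the middle, a final bound on the tail from $h_j$ or $l_j$ to $v$), but there are two genuine gaps. First, the tail and the non-empty case both require a recursive appeal to the theorem itself, and your proposed induction measure (number of vertices) does not decrease in either situation: the pairs $(h_j,v)$, $(p,v)$, $(p',v)$ live on the same point set as $(u,v)$. The paper instead orders \emph{all pairs} by the size of the smallest scaled translate of $R$ having both points on its boundary and inducts on rank in this ordering; every recursive call is to a pair whose smallest homothet is strictly smaller, which is what makes the argument well-founded. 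Relatedly, your suggestion to bound the tail by $d_t(c,v) \le d_2(c,v)$ is backwards (graph distance is never below Euclidean distance unless $(c,v)$ is an edge); the correct step, and the one the paper takes, is that Lemma~\ref{lem:NEDown} delivers an index $j$ with $\A(x-x_{h_j}) \ge y_{h_j}-y \ge 0$, so the induction hypothesis applies to $(h_j,v)$ in the ``gentle'' regime and the resulting bound telescopes with Property~2a/2b/2c/2d.

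Second, the case where $R(u,v)$ contains vertices is not a ``standard empty-rectangle reduction'' and cannot be dismissed: if you pick an arbitrary interior vertex $p$ and recurse on $(u,p)$ and $(p,v)$, the two recursive bounds may be of different types (one pair gentle, one steep), and their sum does not collapse to $(\A+\sqrt{\A^2+1})x+y$. The paper splits $R(u,v)$ into three regions by the lines through $u$ and $v$ parallel to the diagonal of $R$; only a vertex in the middle region $B$ allows the two inductive bounds to add up directly. When $B$ is empty, one must grow the smallest homothet anchored at $u$'s lower-left corner (respectively at $v$'s upper-right corner) to find either a Delaunay edge $(u,p)$ with $p$ in the far region, or an edge $(u,p')$ to the near region, and in the latter case apply the \emph{other} branch of the induction hypothesis to $(p',v)$ and convert between the two bounds using $\A x \ge y$ (or $\A x < y$). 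This region argument and the conversion inequalities are a substantial part of the proof that your proposal leaves unaddressed.
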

\begin{proof}
We consider all pairs of vertices $(u,v)$ and order them by the size of the smallest scaled translate of $R$ that has both $u$ and $v$ on its boundary. We perform induction based on the rank in this ordering. 

We begin with the base case. Consider the first pair $(u,v)$ in this ordering, which has the smallest overall scaled translate of $R$. This smallest rectangle contains no vertices of $P$, as any such vertex would imply the existence of a smaller rectangle with two vertices on its boundary, contradicting that we are considering the smallest one. Since the rectangle is empty, by construction there is an edge between $u$ and $v$. Thus $d_t(u,v) = d_2(u,v) \leq x + y$, so the base case of the inductive hypothesis is satisfied. Note that the base case holds regardless of whether or not $\A d_x(u,v) \ge d_y(u,v)$. 

Next, consider an arbitrary pair $(u,v)$ and assume the theorem holds for all pairs $(u,v)$ defining a smaller rectangle. We consider two cases: $R(u,v)$ does not contain any vertex of $P$, and $R(u,v)$ contains some vertices of $P$.

\emph{Case 1}: There are no vertices inside $R(u,v)$. We distinguish two subcases, either $\A x\ge y$ or $\A x < y$.

\emph{Subcase $\A x\ge y$}: 
Note that since the vertical side of the homothets is the longer side, for the $(x,y)$-coordinate system we have $L=A$, and $Lx \geq y$. 

If $(u,v)$ is an edge in the rectangle Delaunay triangulation, then $d_t(u,v) \le x + y \le (\A +\sqrt{\A^2+1})x + y$. Otherwise, if no rectangle in $R_1, ..., R_k$ is inductive then by Property~1 of Lemma~\ref{lem:firstInductive} we know $d_t(u,v) \le (\A +\sqrt{\A^2+1})x + y$.

Hence, we focus on the case where there is an inductive rectangle. Let $R_i$ be the first inductive rectangle in the sequence $R_1, ..., R_k$. We distinguish the case where the inductive point is $h_i$ and where it is $l_i$. If $h_i$ is the inductive point of $R_i$ then by Property $2a$ of Lemma~\ref{lem:firstInductive} we know $d_{t}(u,h_{i})+ (y_{h_{i}}-y) \le (\A+\sqrt{\A^2+1})x_{h_i}$ and thus $d_{t}(u,h_{i}) \le (\A+\sqrt{\A^2+1})x_{h_i} -  (y_{h_{i}}-y)$.

If $\A (x - x_{h_i}) \ge y_{h_i} - y \ge 0$, we let $h_j = h_i$ in the remainder. Otherwise, we let $j$ be the smallest index larger than $i$ such that $\A (x - x_{h_j}) \ge y_{h_j} - y \ge 0$. By Lemma~\ref{lem:NEDown}, $h_j$ exists and all edges on the path $h_i, ..., h_{j}$ are NE edges. By triangle inequality, $d_{t}(h_{m},h_{m+1}) \le (x_{h_{m+1}} - x_{h_m}) + (y_{h_m} - y_{h_{m+1}})$ for any $h_m$ and $h_{m+1}$ on this path. This implies that $d_t(h_i,h_j) \leq (x_{h_j}-x_{h_i}) + (y_{h_i}-y_{h_j})$. Since $\A (x - x_{h_j}) \ge y_{h_j} - y \ge 0$ and the smallest scaled translate of $R$ with $h_j$ and $v$ on its boundary is smaller than that of $u$ and $v$, we can use induction to get $d_t(h_j,v) \le (\A +\sqrt{\A^2+1})d_x(h_j,v) + d_y(h_j,v)$. Putting everything together, we obtain that
\begin{align*}
d_t(u,v) &\le d_t(u,h_i) + d_t(h_i, h_j) + d_t(h_j,v) \\
&\le  (\A+\sqrt{\A^2+1})x_{h_i} -  (y_{h_{i}}-y) + (x_{h_j}-x_{h_i}) + (y_{h_i}-y_{h_j}) \\
&\quad +(\A +\sqrt{\A^2+1})d_x(h_j,v) + d_y(h_j,v) \\
&= (\A+\sqrt{\A^2+1})x_{h_i}  + (x_{h_j}-x_{h_i}) + (y-y_{h_j}) \\
&\quad +(\A +\sqrt{\A^2+1})d_x(h_j,v) + d_y(h_j,v) \\
&\le (\A +\sqrt{\A^2+1})d_x(u,h_j) - d_y(h_j,v) + (\A +\sqrt{\A^2+1})d_x(h_j,v) + d_y(h_j,v) \\
&= (\A +\sqrt{\A^2+1})x.
\end{align*}
proving the theorem when $h_i$ is the inductive point of $R_i$. 

If $l_i$ is the inductive point of $R_i$ then by Property $2c$ of Lemma~\ref{lem:firstInductive} we know $d_{t}(u,l_{i}) - y_{l_i} \le (\A + \sqrt{\A^2+1})x_{l_i}$ and thus $d_{t}(u,l_{i}) \le (\A + \sqrt{\A^2+1})x_{l_i} + y_{l_i}$.

If $\A (x - x_{l_i}) \ge y - y_{l_i}$, we let $l_j = l_i$ in the remainder. Otherwise, we let $j$ be the smallest index larger than $i$ such that $\A (x - x_{l_j}) \ge y - y_c \ge 0$. By Lemma~\ref{lem:NEDown}, $l_j$ exists and all edges on the path $l_i, ..., l_{j}$ are SE edges. By triangle inequality, $d_{t}(l_{m},l_{m+1}) \le (x_{l_{m+1}} - x_{l_m}) + (y_{l_{m+1}} - y_{h_{m}})$ for any $l_m$ and $l_{m+1}$ on this path. This implies that $d_t(l_i,l_j) \leq (x_{l_j}-x_{l_i}) + (y_{l_j}-y_{l_i})$. Since $\A (x - x_{l_j}) \ge y - y_{l_j} \ge 0$ and the smallest scaled translate of $R$ with $l_j$ and $v$ on its boundary is smaller than that of $u$ and $v$, we can use induction to get $d_t(l_j,v) \le (\A +\sqrt{\A^2+1})d_x(l_j,v) + d_y(l_j,v)$. Putting everything together, this implies that
\begin{align*}
d_t(u,v) &\le d_t(u,l_i) + d_t(l_i, l_j) + d_t(l_j,v) \\
&\le (\A + \sqrt{\A^2+1})x_{l_i} + y_{l_i} + (x_{l_j}-x_{l_i}) + (y_{l_j}-y_{l_i}) \\
&\quad + (\A +\sqrt{\A^2+1})d_x(l_j,v) + d_y(l_j,v) \\
&\le (\A + \sqrt{\A^2+1})x_{l_i} + (x_{l_j}-x_{l_i}) + y_{l_j} + (\A +\sqrt{\A^2+1})d_x(l_j,v) + d_y(l_j,v) \\
&\le (\A +\sqrt{\A^2+1})x + y.
\end{align*}
completing the proof of Case 1 when $\A x \geq y$.

\emph{Subcase $\A x < y$}:
Consider the $(x',y')$-coordinate system where the $x'$-axis equals the $y$-axis and the $y'$-axis equals the $x$-axis. See Figure~\ref{fig:swap_axes}. When we look at the homothet rectangles $R_1,\ldots,R_k$ intersecting the segment $uv$ in the $(x',y')$-coordinate system, the horizontal side of the homothets is the longer side and we have $L = 1/A$. Therefore, $Ax < y$ implies $Lx' > y'$.

\begin{figure}[ht]
    \centering
    \includegraphics{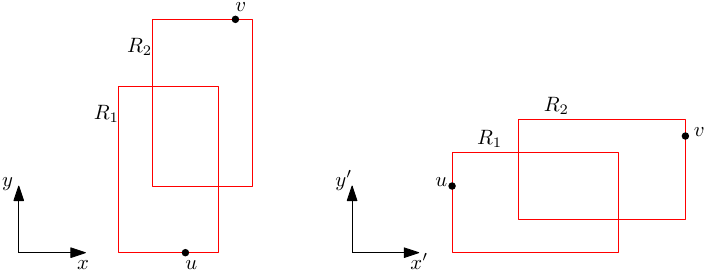}
    \caption{Homothet rectangles $R_1,\ldots,R_k$ in the $(x',y')$-coordinate system, for $k=2$.}
    \label{fig:swap_axes}
\end{figure}

If $(u,v)$ is an edge in the rectangle Delaunay triangulation, then $d_t(u,v) \le x' + y' \le (1 +\sqrt{\tfrac{1}{\A^2}+1})y + \A x$. If no rectangle in $R_1, ..., R_k$ is inductive then by Property~1 of Lemma~\ref{lem:firstInductive} we know $d_t(u,v) \le (\tfrac{1}{\A} +\sqrt{\tfrac{1}{\A^2}+1})x' + y'\le (1 +\sqrt{\tfrac{1}{\A^2}+1})y + \A x$.

When there is an inductive rectangle, define $R_i$, $h_i$ and $l_i$ as above. If $h_i$ is the inductive point of $R_i$ then by Property $2b$ of Lemma~\ref{lem:firstInductive} we know $d_{t}(u,h_{i})+ \A(y'_{h_{i}}-y') \le (1+\sqrt{\tfrac{1}{\A^2}+1})x'_{h_i}$.

If $\tfrac{1}{\A} (x' - x'_{h_i}) \ge y'_{h_i} - y' \ge 0$, we let $h_j = h_i$ in the remainder. Otherwise, we let $j$ be the smallest index larger than $i$ such that $\tfrac{1}{\A} (x' - x'_{h_j}) \ge y'_{h_j} - y' \ge 0$. By Lemma~\ref{lem:NEDown}, $h_j$ exists and all edges on the path $h_i, ..., h_{j}$ are NE edges. By triangle inequality, $d_t(h_i,h_j) \leq (x'_{h_j}-x'_{h_i}) + (y'_{h_i}-y'_{h_j})$. Since $\tfrac{1}{\A} (x' - x'_{h_j}) \ge y'_{h_j} - y' \ge 0$ and the smallest scaled translate of $R$ with $h_j$ and $v$ on its boundary is smaller than that of $u$ and $v$, we can use induction to get $d_t(h_j,v) \le (1 +\sqrt{\tfrac{1}{\A^2}+1})d_y(h_j,v) + \A d_x(h_j,v)$. Putting everything together, we obtain

\begin{align*}
d_t(u,v) &\le d_t(u,h_i) + d_t(h_i, h_j) + d_t(h_j,v) \\
&\le  (1+\sqrt{\tfrac{1}{\A^2}+1})x'_{h_i} -  \A(y'_{h_{i}}-y') + (x'_{h_j}-x'_{h_i}) + (y'_{h_i}-y'_{h_j}) \\
&\quad +(1 +\sqrt{\tfrac{1}{\A^2}+1})d_y(h_j,v) + \A d_x(h_j,v) \\
&\le  (1+\sqrt{\tfrac{1}{\A^2}+1})x'_{h_i} -  \A(y'_{h_{i}}-y') + (x'_{h_j}-x'_{h_i}) + \A (y'_{h_i}-y'_{h_j})  \\
&\quad +(1 +\sqrt{\tfrac{1}{\A^2}+1})d_y(h_j,v) + \A d_x(h_j,v) \\
&\le (1+\sqrt{\tfrac{1}{\A^2}+1})x'_{h_i}  + (x'_{h_j}-x'_{h_i}) - \A d_{y'}(h_j,v) \\
&\quad +(1 +\sqrt{\tfrac{1}{\A^2}+1})d_y(h_j,v) + \A d_x(h_j,v).
\end{align*}

Recall that the $y'$-axis in the $(x',y')$-coordinate system equals the $x$-axis in the $(x,y)$-coordinate system, so $\A d_{y'}(h_j,v)=\A d_{x}(h_j,v)$. Thus
\begin{align*}
d_t(u,v) &\le (1+\sqrt{\tfrac{1}{\A^2}+1})x'_{h_j} 
+(1 +\sqrt{\tfrac{1}{\A^2}+1})d_y(h_j,v) \\
&= (1+\sqrt{\tfrac{1}{\A^2}+1})y.
\end{align*}

If $l_i$ is the inductive point of $R_i$ then by Property $2d$ of Lemma~\ref{lem:firstInductive} we know $d_{t}(u,l_{i}) -  \A y'_{l_i} \le (1+\sqrt{\tfrac{1}{\A^2}+1})x'_{l_i}$. Thus $d_{t}(u,l_{i}) \le (1+\sqrt{\tfrac{1}{\A^2}+1})x'_{l_i} + \A y'_{l_i}$.

If $\tfrac{1}{\A} (x' - x'_{l_i}) \ge y' - y'_{l_i} \ge 0$, we let $l_j = l_i$ in the remainder. Otherwise, we let $j$ be the smallest index larger than $i$ such that $\tfrac{1}{\A} (x' - x'_{l_j}) \ge y' - y'_{l_j} \ge 0$. By Lemma~\ref{lem:NEDown}, $l_j$ exists and all edges on the path $l_i, ..., l_{j}$ are SE edges. By triangle inequality, $d_t(l_i,l_j) \leq (x'_{l_j}-x'_{l_i}) + (y'_{l_j}-y'_{l_i})$. Since $\tfrac{1}{\A} (x' - x'_{l_j}) \ge y' - y'_{l_j} \ge 0$ and the smallest scaled translate of $R$ with $l_j$ and $v$ on its boundary is smaller than that of $u$ and $v$, we can use induction to get $d_t(l_j,v) \le (1 +\sqrt{\tfrac{1}{\A^2}+1})d_y(l_j,v) + \A d_x(l_j,v)$. Thus we obtain that 

\begin{align*}
d_t(u,v) &\le d_t(u,l_i) + d_t(l_i, l_j) + d_t(l_j,v) \\
&\le  (1+\sqrt{\tfrac{1}{\A^2}+1})x'_{l_i} + \A y'_{l_i} + (x'_{l_j}-x'_{l_i}) + (y'_{l_j}-y'_{l_i}) \\
&\quad +(1 +\sqrt{\tfrac{1}{\A^2}+1})d_y(l_j,v) + \A d_x(l_j,v) \\
&\le  (1+\sqrt{\tfrac{1}{\A^2}+1})x'_{l_i} + \A y'_{l_i} + (x'_{l_j}-x'_{l_i}) +\A (y'_{l_j}-y'_{l_i}) \\
&\quad +(1 +\sqrt{\tfrac{1}{\A^2}+1})d_y(l_j,v) + \A d_x(l_j,v) \\
&= (1+\sqrt{\tfrac{1}{\A^2}+1})x'_{l_i}  + (x'_{l_j}-x'_{l_i}) + \A y'_{l_j} \\
&\quad +(1 +\sqrt{\tfrac{1}{\A^2}+1})d_y(l_j,v) + \A d_x(l_j,v).
\end{align*}
Using that $\A y'_{l_j}=\A x_{l_j}$, we obtain
\begin{align*}
d_t(u,v) &\le (1+\sqrt{\tfrac{1}{\A^2}+1})y_{l_j} + (1 +\sqrt{\tfrac{1}{\A^2}+1})d_y(l_j,v) + \A x
\\
&= (1+\sqrt{\tfrac{1}{\A^2}+1})y + \A x.
\end{align*}
completing the proof of Case 1.

\emph{Case 2}: There are vertices of $P$ inside $R(u, v)$. We distinguish two subcases, either $\A x\ge y$ or $\A x < y$.

\emph{Subcase $\A x\ge y$}: We split $R(u,v)$ into three regions formally defined as follows: $\mathcal{A} = \{ p \mid p \text{ is inside } R(u,v) \text{ such that } \A d_x(u,p) < d_y(u,p) \}$, $\mathcal{B} = \{ p \mid p \text{ is inside } R(u,v) \text{ such that }$ $\A d_x(u,p) \geq d_y(u,p) \text{ and } \A d_x(p,v) \geq d_y(p,v) \}$, $\mathcal{C} = \{ p \mid p \text{ is inside } R(u,v) \text{ such that }$ $\A d_x(p,v) < d_y(p,v) \}$. Informally, these three regions can be constructed by considering the line through $u$ and the line through $v$ parallel to the line through the diagonal of $R$ and labelling the resulting regions $\mathcal{A}$, $\mathcal{B}$, and $\mathcal{C}$ from left to right (see Figure~\ref{fig:threeRegions}(a)).

\begin{figure}[ht]
\centering
\includegraphics{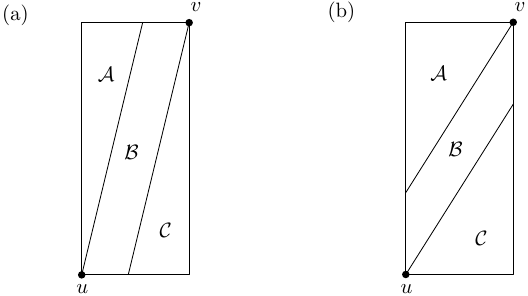}
\caption{(a) The three regions in $R(u, v)$ when $\A x \ge y$. (b) The three regions in $R(u, v)$ when $\A x < y$.}
\label{fig:threeRegions}
\end{figure} 

If there exists a vertex $p$ inside region $\mathcal{B}$, then we can apply induction on the pairs $(u,p)$, which satisfies $\A d_x(u,p) \ge d_y(u,p)$, and $(p,v)$, which satisfies $\A d_x(p,v) \ge d_y(p,v)$:
\begin{align*}
d_t(u,v) &\le d_t(u,p) + d_t(p,v)  \\
&\le (\A +\sqrt{\A^2+1})d_x(u,p) + d_y(u,p) + (\A +\sqrt{\A^2+1})d_x(p,v) + d_y(p,v) \\
&= (\A +\sqrt{\A^2+1})x + y.
\end{align*}

If there is no vertex inside region $\mathcal{B}$, we define $R_u$ to be the smallest scaled translate of $R$ that has $u$ on its lower left corner and some vertex $p \in \mathcal{A}$ in $R(u,v)$ on its boundary. Similarly, we define $R_v$ to be the smallest scaled translate of $R$ that has $v$ on its upper right corner and some vertex $q \in \mathcal{C}$ in $R(u,v)$ on its boundary. Since $R(u,v)$ is not empty, at least one of $p$ and $q$ must exist. Assume without loss of generality that $p$ exists. In this case we have that $\A d_x(p,v) > d_y(p,v)$ and the smallest homothet with $p$ and $v$ on its boundary is smaller than that of $u$ and $v$. If $(u,p)$ is an edge in the rectangle Delaunay triangulation, then we obtain that: 
\begin{align*}
d_t(u,v) &\le d_t(u,p) + d_t(p,v) \\
&= d_2(u,p) + d_t(p,v) \\
&\le d_x(u,p) + d_y(u,p) + (\A +\sqrt{\A^2+1})d_x(p,v) + d_y(p,v) \\
&\le (\A +\sqrt{\A^2+1})x + y.
\end{align*}
An analogous argument can be used if $q$ exists and $(v,q)$ is an edge in the rectangle Delaunay triangulation. 

Hence, it remains to consider the case where $(u,p)$ is not an edge, in which case $R_u$ is not empty. This implies that there exists a $p'\in \mathcal{C}$ such that $(u,p')$ is an edge. We have that $\A d_x(p',v) < d_y(p',v)$ and the smallest scaled translate of $R$ with $p'$ and $v$ on its boundary is smaller than that of $u$ and $v$. By the induction hypothesis, we have:
\begin{align*}
d_t(u,v) &\le d_t(u,p') + d_t(p',v) \\
&= d_2(u,p') + d_t(p',v) \\
&\le d_x(u,p') + d_y(u,p') + \A d_x(p',v) + \left( 1 +\sqrt{\tfrac{1}{\A^2}+1} \right)d_y(p',v) \\
&\le \A x + \left( 1 +\sqrt{\tfrac{1}{\A^2}+1} \right) y.
\end{align*}
Since $\A x \ge y$ and $\left( 1 +\sqrt{\tfrac{1}{\A^2}+1} \right)>1$, we have
\begin{align*}
d_t(u,v) &\le 1\A x + \left( 1 +\sqrt{\tfrac{1}{\A^2}+1} \right) y \\
&\le \left( 1 +\sqrt{\tfrac{1}{\A^2}+1} \right) \A x + 1y\\
&= \left( \A +\sqrt{\A^2+1} \right)x + y.
\end{align*}

\emph{Subcase $\A x< y$}: We split $R(u,v)$ into three regions formally defined as follows: $\mathcal{A} = \{ p \mid p \text{ is inside } R(u,v) \text{ such that } \A d_x(v,p) \ge d_y(v,p) \}$, $\mathcal{B} = \{ p \mid p \text{ is inside } R(u,v) \text{ such that }$ $\A d_x(v,p) < d_y(v,p) \text{ and } \A d_x(u,p) < d_y(u,p) \}$, $\mathcal{C} = \{ p \mid p \text{ is inside } R(u,v) \text{ such that }$ $\A d_x(u,p) \ge d_y(u,p) \}$. See Figure~\ref{fig:threeRegions}(b).

If there exists a vertex $p$ inside region $\mathcal{B}$, then we can apply induction on the pairs $(u,p)$, which satisfies $\A d_x(u,p) < d_y(u,p)$, and $(p,v)$, which satisfies $\A d_x(v,p) < d_y(v,p)$:
\begin{align*}
d_t(u,v) &\le d_t(u,p) + d_t(p,v)  \\
&\le \A d_x(u,p) + (1+\sqrt{\tfrac{1}{\A^2}+1})d_y(u,p) + \A d_x(p,v) + (1+\sqrt{\tfrac{1}{\A^2}+1})d_y(p,v) \\
&= \A x + (1+\sqrt{\tfrac{1}{\A^2}+1})y.
\end{align*}

If there is no vertex inside region $\mathcal{B}$, we define $R_u$ to be the smallest scaled translate of $R$ that has $u$ on its lower left corner and some vertex $p \in \mathcal{A}$ in $R(u,v)$ on its boundary. Similarly, we define $R_v$ to be the smallest scaled translate of $R$ that has $v$ on its upper right corner and some vertex $q \in \mathcal{C}$ in $R(u,v)$ on its boundary. Since $R(u,v)$ is not empty, at least one of $p$ and $q$ must exist. Assume without loss of generality that $p$ exists. In this case we have that $\A d_x(p,v) > d_y(p,v)$ and the smallest rectangle with $p$ and $v$ on its boundary is smaller than that of $u$ and $v$. If $(u,p)$ is an edge in the rectangle Delaunay triangulation, then we obtain that: 
\begin{align*}
d_t(u,v) &\le d_t(u,p) + d_t(p,v) \\
&= d_2(u,p) + d_t(p,v) \\
&\le d_x(u,p) + d_y(u,p) + (\A +\sqrt{\A^2+1})d_x(p,v) + d_y(p,v) \\
&\le (\A +\sqrt{\A^2+1})x + y.
\end{align*}
Since $\A x< y$, we have
\begin{align*}
d_t(u,v)&\le (\A +\sqrt{\A^2+1})x + y\\
&= (1+\sqrt{1+\tfrac{1}{\A^2}})\A x + 1y\\
&\le \A x + (1+\sqrt{\tfrac{1}{\A^2}+1})y.
\end{align*}
An analogous argument can be used if $q$ exists and $(v,q)$ is an edge in the rectangle Delaunay triangulation. 

Hence, it remains to consider the case where $(u,p)$ is not an edge, in which case $R_u$ is not empty. This implies that there exists a $p'\in \mathcal{C}$ such that $(u,p')$ is an edge. We have that $\A d_x(p',v) < d_y(p',v)$ and the smallest scaled translate of $R$ with $p'$ and $v$ on its boundary is smaller than that of $u$ and $v$. By the induction hypothesis, we have:
\begin{align*}
d_t(u,v) &\le d_t(u,p') + d_t(p',v) \\
&= d_2(u,p') + d_t(p',v) \\
&\le d_x(u,p') + d_y(u,p') + \A d_x(p',v) + \left( 1 +\sqrt{\tfrac{1}{\A^2}+1} \right)d_y(p',v) \\
&\le \A x + \left( 1 +\sqrt{\tfrac{1}{\A^2}+1} \right) y.
\end{align*}
This completes the proof of Case 2 and the theorem.
\end{proof}

We can now use Theorem~\ref{thm:upperBound1} to show an upper bound of the spanning ratio of the rectangle Delaunay triangulation. For any pair of vertices $u, v$ in the graph, if $\A d_x(u,v) \ge d_y(u,v)$ we have
\[
\frac{d_t(u,v)}{d_2(u,v)} < \frac{(\A +\sqrt{\A^2+1})x + y}{\sqrt{x^2+y^2}}.
\]
This function is maximized when $y/x = 1/(\A +\sqrt{\A^2+1})$, where the function is equal to \[\sqrt{2}\sqrt{1+\A^2+\A\sqrt{1+\A^2}}.\] 
On the other hand, when $\A d_x(u,v) < d_y(u,v)$, we can get 

\[
\frac{d_t(u,v)}{d_2(u,v)} < \frac{\A x + (1 +\sqrt{{\tfrac{1}{\A^2}}+1})y}{\sqrt{x^2+y^2}}.
\]
This function is maximized when $y/x = (1 +\sqrt{{\tfrac{1}{\A^2}}+1})/\A$, where the function value equals \[\sqrt{\A^2+2+2 \sqrt{1+{\tfrac{1}{\A^2}}}+\tfrac{1}{\A^2}},\]
which is at most $\sqrt{2}\sqrt{1+\A^2+\A\sqrt{1+\A^2}}$. This implies the main result of the paper.

\begin{theorem}
  The spanning ratio of the rectangle Delaunay triangulation is at most $\sqrt{2}\sqrt{1+\A^2+\A\sqrt{1+\A^2}}$, where $\A$ is the aspect ratio of the rectangle used in its construction. 
\end{theorem}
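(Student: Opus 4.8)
The plan is to obtain the theorem as an immediate consequence of Theorem~\ref{thm:upperBound1}, the only remaining work being a one-variable optimization over the direction of the segment $uv$. Fix two vertices $u \neq v$ and write $x = d_x(u,v) > 0$ and $y = d_y(u,v) > 0$, so that $d_2(u,v) = \sqrt{x^2+y^2}$ and the spanning ratio equals $\sup_{u\neq v} d_t(u,v)/d_2(u,v)$. By Theorem~\ref{thm:upperBound1}, either $\A x \ge y$ and $d_t(u,v) \le (\A+\sqrt{\A^2+1})x + y$, or $\A x < y$ and $d_t(u,v) \le \A x + (1+\sqrt{\tfrac{1}{\A^2}+1})\,y$. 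It therefore suffices to bound each of the ratios $\frac{(\A+\sqrt{\A^2+1})x+y}{\sqrt{x^2+y^2}}$ and $\frac{\A x + (1+\sqrt{\tfrac{1}{\A^2}+1})y}{\sqrt{x^2+y^2}}$ by $\sqrt{2}\sqrt{1+\A^2+\A\sqrt{1+\A^2}}$.

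For the first ratio the numerator is a linear form $a x + b y$ with $a = \A+\sqrt{\A^2+1}$ and $b = 1$, both positive, so by the Cauchy--Schwarz inequality $\frac{ax+by}{\sqrt{x^2+y^2}} \le \sqrt{a^2+b^2}$, with equality exactly when $(x,y)$ is parallel to $(a,b)$, i.e.\ when $y/x = 1/(\A+\sqrt{\A^2+1})$. Expanding, $a^2+b^2 = (\A+\sqrt{\A^2+1})^2 + 1 = 2\A^2 + 2 + 2\A\sqrt{\A^2+1} = 2\bigl(1+\A^2+\A\sqrt{1+\A^2}\bigr)$, which is precisely $\bigl(\sqrt{2}\sqrt{1+\A^2+\A\sqrt{1+\A^2}}\bigr)^2$, giving the claimed value.

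For the second ratio the same Cauchy--Schwarz argument yields $\frac{\A x + (1+\sqrt{\tfrac{1}{\A^2}+1})y}{\sqrt{x^2+y^2}} \le \sqrt{\A^2 + (1+\sqrt{\tfrac{1}{\A^2}+1})^2}$, so it remains to check that this does not exceed the bound from the first case. Expanding the square and using $\sqrt{\tfrac{1}{\A^2}+1} = \tfrac{1}{\A}\sqrt{1+\A^2}$, this comparison reduces to the inequality $\tfrac{1}{\A^2} + \tfrac{2}{\A}\sqrt{1+\A^2} \le \A^2 + 2\A\sqrt{1+\A^2}$, which holds term by term since $\A \ge 1$ implies $1/\A^2 \le \A^2$ and $1/\A \le \A$. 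Taking the maximum over the two cases establishes the theorem.

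Every step here should be routine; there is no genuinely hard part, since all the geometric content already resides in Theorem~\ref{thm:upperBound1}. The only place a hypothesis is used is $\A \ge 1$ (true because $\A$ is the ratio of the long side of $R$ to the short side), which is what makes the second case no worse than the first; it also guarantees that the Cauchy--Schwarz-optimal direction $y/x = 1/(\A+\sqrt{\A^2+1})$ in the first case satisfies $\A x \ge y$, so the stated value is in fact the supremum and not merely an upper bound, matching the known lower bound of Bose~\etal~\cite{BCR18}.
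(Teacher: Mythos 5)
Your proposal is correct and follows essentially the same route as the paper: it deduces the theorem from Theorem~\ref{thm:upperBound1} and optimizes the two linear bounds over the direction of $uv$, with your Cauchy--Schwarz step being just an explicit form of the paper's ``maximized when $y/x = \dots$'' computation, and your term-by-term comparison using $\A \ge 1$ matching the paper's observation that the second case's value is at most $\sqrt{2}\sqrt{1+\A^2+\A\sqrt{1+\A^2}}$. The only caveat is your closing remark that the bound is a supremum rather than merely an upper bound, which does not follow from this optimization alone (tightness comes from the known lower bound of Bose~\etal~\cite{BCR18}), but this is an aside and does not affect the validity of the proof of the stated upper bound.
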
 

Since it was already known that $\sqrt{2}\sqrt{1+\A^2+\A\sqrt{1+\A^2}}$ is a lower bound on the spanning ratio~\cite{BCR18}, we obtain that the bound of $\sqrt{2}\sqrt{1+\A^2+\A\sqrt{1+\A^2}}$ is tight. 

\section{Conclusion} 
\label{sec:conclusion}
We generalized and extended the proof technique of Bonichon~\etal~\cite{BGHP15} to prove the exact bound on the spanning ratio of rectangle Delaunay triangulation. While it has been known for quite some time that all generalized Delaunay graphs are plane spanners, a tight upper bound on the spanning ratio is known for only a small set of special cases: the equilateral triangle, the square and the regular hexagon. Our proof adds the class of all rectangles to this list by expressing the spanning ratio in terms of their aspect ratio. 

We note that while our proof is constructive, it is not constructive in a local sense, i.e., it doesn't immediately give rise to a local routing algorithm for the rectangle Delaunay triangulation. Future work therefore includes coming up with a routing algorithm that uses only local information (source, destination, and neighbours of the current vertex) to find a relative short path in rectangle Delaunay triangulation, as is known to exist for the Delaunay triangulation~\cite{bonichon_et_al:LIPIcs:2018:9485,BBCPR2017DelaunayJournal} and the equilateral triangle Delaunay triangulation~\cite{BFRV2015RoutingJournal}.

\newpage
\bibliography{references}
\end{document}